        \newcommand\Z{{\mathbb Z}}   
       	\newcommand\R{{\mathbb R}}   
	\newcommand\C{{\mathbb C}}   
        \renewcommand\H{{\mathbb H}}   
        \newcommand\K{{\mathbb K}} 
	\renewcommand\O{{\mathbb O}}   
	\renewcommand\S{{\mathbb S}}
        \newcommand{\OO}{{\rm O}}   
	\newcommand{\SO}{{\rm SO}}
	\newcommand{\SU}{{\rm SU}}   
        \newcommand{\Spin}{{\rm Spin}}   
	\newcommand{\U}{{\rm U}}
        \newcommand{\h}{{\mathfrak {h}}}   
	\newcommand{\su}{{\mathfrak {su}}}
	\newcommand{\tensor}{\otimes}
        \newcommand{\iso}{\cong}   
	\newcommand{\be}{\begin{equation}}   
        \newcommand{\ee}{\end{equation}}   
        \newcommand{\ba}{\begin{eqnarray}}   
        \newcommand{\ea}{\end{eqnarray}}   
        \newcommand{\ban}{\begin{eqnarray*}}   
        \newcommand{\ean}{\end{eqnarray*}}
	\newcommand{\Hilb}{{\rm Hilb}}   
	\newcommand{\Rep}{{\rm Rep}}   
        \newcommand{\maps}{\colon}   
	\newcommand{\tr}{{\rm tr}}
	\newtheorem{theorem}{Theorem}
\begin{document}   
 
	\begin{center}   
	{\bf Division Algebras and Quantum Theory \\}   
	{\em John\ C.\ Baez\\}   
	\vspace{0.3cm}   
	{\small Centre for Quantum Technologies  \\
        National University of Singapore \\
        Singapore 117543  \\  
        \vspace{0.3cm}
        and \\ 
        \vspace{0.3cm}
        Department of Mathematics \\
        University of California \\   
        Riverside CA 92521\\  \quad \\}   
        {\small email:  baez@math.ucr.edu\\} 
	\vspace{0.3cm}   
	{\small April 18, 2011}
	\vspace{0.3cm}   
	\end{center}   

\begin{abstract}
Quantum theory may be formulated using Hilbert spaces over
any of the three associative normed division algebras: 
the real numbers, the complex numbers and the quaternions.
Indeed, these three choices appear naturally in a number of
axiomatic approaches.  However, there are internal problems 
with real or quaternionic quantum theory.  Here we argue that 
these problems can be resolved if we treat real, complex and 
quaternionic quantum theory as part of a unified structure.
Dyson called this structure the `three-fold way'.  It is perhaps
easiest to see it in the study of irreducible unitary representations 
of groups on complex Hilbert spaces.  These representations 
come in three kinds: those that are not isomorphic to their own 
dual (the truly `complex' representations), those 
that are self-dual thanks to a symmetric bilinear pairing 
(which are `real', in that they are the complexifications
of representations on real Hilbert spaces), and those that
are self-dual thanks to an antisymmetric bilinear pairing
(which are `quaternionic', in that they are the underlying
complex representations of representations on quaternionic
Hilbert spaces).  This three-fold classification sheds 
light on the physics of time reversal symmetry, and it already plays
an important role in particle physics.  More generally, 
Hilbert spaces of any one of the three kinds---real, complex and 
quaternionic---can be seen as Hilbert spaces of the other
kinds, equipped with extra structure. 
\end{abstract}

\section{Introduction}

Ever since the birth of quantum mechanics, there has been curiosity about
the special role that the complex numbers play in this theory.  Of
course, one could also ask about the role of real numbers in classical
mechanics: to a large extent, this question concerns the role of the
continuum in physical theories.  In quantum mechanics there is a new 
twist.  Classically, observables are described by real-valued functions.  
In quantum mechanics, they are described by self-adjoint operators on a
complex Hilbert space.  In both theories, the expectation value of an
observable in any state is {\sl real}.  But the complex numbers play a
fundamental role in quantum mechanics that is not apparent in
classical mechanics.  Why?

The puzzle is heightened by the fact that we can formulate many
aspects of quantum theory using real numbers or quaternions instead
of complex numbers.  There are precisely four `normed division
algebras': the real numbers $\R$, the complex numbers $\C$, the
quaternions $\H$ and the octonions $\O$.  Roughly speaking, these are
the number systems extending the reals that have an `absolute value'
obeying the equation $|xy| = |x| \, |y|$.  Since the octonions are
nonassociative it proves difficult to formulate quantum theory
based on these, except in a few special cases.  But the other three
number systems support the machinery of quantum theory quite
nicely.

Indeed, as far back as 1934, Jordan, von Neumann and Wigner~\cite{JNW}
saw real, complex and quaternionic quantum theory appear on an
equal footing in their classification of algebras of observables.
These authors classified {\it finite-dimensional} algebras of 
observables, but later their work was generalized to the 
infinite-dimensional case.  We review some of these classification theorems 
in Section~\ref{classification}.  

At first glance, the lesson seems to be that real, complex and
quaternionic quantum theory stand on an equal footing, at least 
mathematically.  In particular, while quaternionic Hilbert
spaces may be unfamiliar, we shall see their definition soon, and 
they turn out to be quite well-behaved in many respects.  But
experiments seem to show that our universe is described by the complex 
version of quantum theory, not the real or quaternionic version.  
So how did nature decide among these three choices?

In fact, the real and quaternionic versions of quantum theory have
some `problems', or at least striking differences from the familiar
complex version.  First, they lack the usual correspondence between
strongly continuous one-parameter unitary groups and self-adjoint
operators, which in the complex case goes by the name of Stone's
Theorem.  Second, the tensor product of two quaternionic Hilbert
spaces is not a quaternionic Hilbert space.  We discuss these issues 
in Section~\ref{problems}.

One could take these `problems' as the explanation of why nature uses
complex quantum theory, and let the matter drop there.  But this would
be shortsighted.  Indeed, the main claim of this paper is that
instead of being distinct alternatives, real, complex and quaternionic
quantum mechanics are three aspects of a single unified structure.  
Nature did not choose one: she chose all three!

The evidence for this goes back to a very old idea in group theory:
the Frobenius--Schur indicator \cite{FS,Bourbaki2}.  This is a way
of computing a number from an irreducible unitary representation of 
a compact group $G$ on a complex Hilbert space $H$, say 
\[                 \rho \maps G \to \U(H)   \]
where $\U(H)$ is the group of unitary operators on $H$.  
Any such representation is finite-dimensional, so we can take the
trace of the operator $\rho(g^2)$ for any group element $g \in G$, 
and then perform the integral
\[                 \int_G \tr(\rho(g^2)) \, dg \]
where $dg$ is the normalized Haar measure on $G$.
This integral is the {\bf Frobenius--Schur indicator}.  It always 
equals $1$, $0$ or $-1$, and these three cases correspond to whether 
the representation $\rho$ is `real', `complex' or `quaternionic'.  
Here we are using these words in a technical sense.  Remember, 
the Hilbert space $H$ is complex.  However:
\begin{itemize}
\item If the Frobenius--Schur indicator is $1$, then 
$\rho$ is the complexification of a
representation of $G$ on some real Hilbert space.  In this case we
call $\rho$ {\bf real}.
\item If the Frobenius--Schur indicator is $-1$, then 
$\rho$ is the underlying complex
representation of a representation of $G$ on some quaternionic Hilbert
space.  In this case we call $\rho$ {\bf quaternionic}.
\item If the Frobenius--Schur indicator is $0$, then neither of the
above alternatives hold.  In this case we call $\rho$ {\bf complex}.
\end{itemize}

The Frobenius--Schur indicator is an appealingly concrete way to 
decide whether an irreducible representation is real, quaternionic or
complex.  However, the significance of this threefold classification
is made much clearer by another result, which focuses on the key concept
of {\it duality}.  The represention $\rho$ has a dual $\rho^*$, 
which is a unitary representation of $G$ 
on the dual Hilbert space $H^*$.  Of course, the Hilbert space
$H$ has the same dimension as its dual, so they are isomorphic.  
The question thus arises: is the representation $\rho$ isomorphic 
to its dual?  Instead of a simple yes-or-no answer, it turns out 
there are three cases:

\begin{itemize}
\item The representation $\rho$ is real iff it is 
isomorphic to its dual thanks
to the existence of an invariant nondegenerate {\em symmetric} bilinear
form $g \maps H \times H \to \C$.  
\item The representation $\rho$ is quaternionic iff it is
isomorphic to its dual thanks
to the existence of an invariant nondegenerate {\em antisymmetric} bilinear 
form $g \maps H \times H \to \C$.  
\item The representation $\rho$ is complex iff it 
is not isomorphic to its dual.  
\end{itemize}
We recall the proof of this result in Section~\ref{three-fold}.
Freeman Dyson called it the `three-fold way'~\cite{Dyson}.  It also
appears among Vladimir Arnold's list of `trinities'~\cite{Arnold}.
So, in some sense it is well known.  However, its implications for the
foundations of quantum theory seem to have been insufficiently
explored.

What are the implications of the three-fold way?

First, since elementary particles are
often described using irreducible unitary representations of compact
groups, it means that particles come in three kinds: real, complex
and quaternionic!  Of course the details depend not just on the
particle itself, but on the group of symmetries we consider.

For example, take any spin-$\frac{1}{2}$ particle, and consider only
its rotational symmetries.  Then we can describe this particle using a
unitary representation of $\SU(2)$ on $\C^2$.  This representation is
quaternionic.  Why?  Because we can think of a pair of complex numbers
as a quaternion, and $\SU(2)$ as the group of unit quaternions.  
The spin-$\frac{1}{2}$ representation of $\SU(2)$ is then revealed 
to be the action of unit quaternions on the quaternions via left
multiplication.  

In slogan form: {\em qubits are not just quantum---they are also
quaternionic!}  More generally, all particles of half-integer spin are
quaternionic, while particles of integer spin are real, as long as we
consider them only as representations of $\SU(2)$.   

This explains why the square of time reversal is $1$ for 
particles of integer spin, but $-1$ for particles of 
half-integer spin.  Time reversal is `antilinear': it commutes 
with real scalars but anticommutes with multiplication by $i$.   An 
antilinear operator whose square is $1$ acts like complex conjugation.
So, we automatically get such an operator on the complexification of 
any real Hilbert space.   An antilinear operator whose square is $-1$ 
acts like the quaternion unit $j$, since $ij = -ji$ and $j^2 = -1$.  
So, we get such an operator on the underlying complex Hilbert space 
of any quaternionic Hilbert space.  

There is a nice heuristic argument for the same fact using 
`string diagrams', which are a mathematical generalization of 
Feynman diagrams~\cite{BL,BS}.  Any irreducible representation of 
$\SU(2)$ has an invariant nondegenerate bilinear form
$g \maps H \times H \to C$.   This is symmetric:
\[      g(v,w) = g(w,v)   \]
when the spin is an integer, and antisymmetric
\[      g(v,w) = -g(w,v)  .\]
when the spin is a half-integer.  So, in terms of string diagrams:
\[
 \xy  0 ;/r.15pc/:
    (-5,8)*{}="x1";
    (5,8)*{}="x2";
    (-5,5)*{}="m1";
    (5,5)*{}="m2";
    (-5,-5)*{}="k1";
    (5,-5)*{}="k2";
    (-5,-8)*{}="y1";
    (5,-8)*{}="y2";
 \vtwist~{"m1"}{"m2"}{"k1"}{"k2"};
 "x1";"m1" **\dir{-}; 
 "x2";"m2" **\dir{-}; 
 "k1";"y1" **\dir{-}; 
 "k2";"y2" **\dir{-}; 
    "y1";"y2" **\crv{(-5,-14) & (5,-14)};
\endxy
\quad
= 
\quad \pm \; \;
 \xy 0 ;/r.15pc/:
    (-5,8)*{}="x1";
    (5,8)*{}="x2";
    (-5,5)*{}="m1";
    (5,5)*{}="m2";
    (-5,-5)*{}="k1";
    (5,-5)*{}="k2";
    (-5,-8)*{}="y1";
    (5,-8)*{}="y2";
 "m1";"k1" **\dir{-}; 
 "m2";"k2" **\dir{-}; 
 "x1";"m1" **\dir{-}; 
 "x2";"m2" **\dir{-}; 
 "k1";"y1" **\dir{-}; 
 "k2";"y2" **\dir{-}; 
    "y1";"y2" **\crv{(-5,-14) & (5,-14)};
\endxy
\]
If we manipulate this equation a bit, we get:
\[ \xy 0 ;/r.20pc/:
  (0,15)*{}="T";
  (0,-15)*{}="B";
  (0,7.5)*{}="T'";
  (0,-7.5)*{}="B'";
  "T";"T'" **\dir{-}?(.5)*\dir{-};
    "B";"B'" **\dir{-};
    (-4.5,0)*{}="MB";
    (-10.5,0)*{}="LB";
    "T'";"LB" **\crv{(-1.5,-6) & (-10.5,-6)}; \POS?(.25)*+{\hole}="2z";
    "LB"; "2z" **\crv{(-12,9) & (-3,9)};
    "2z";"B'" **\crv{(0,-4.5)};
    (2,13)*{};
    \endxy
    \qquad = \quad \; \pm \; \;
  \xy 0 ;/r.20pc/:
  (0,15)*{}="T";
  (0,-15)*{}="B";
  (0,7.5)*{}="T'";
  (0,-7.5)*{}="B'";
  "T";"B" **\dir{-}?(.5)*\dir{-};
    (2,13)*{};
    \endxy
 \]
The diagram at left shows a particle reversing its direction in time
twice, so not surprisingly, the square of time reversal is $\pm 1$
depending on whether the spin is an integer or half-integer.
But we can also see more.  Suppose we draw the diagram at left as a 
`ribbon':
\[
 \xy
 (0,0)*{\includegraphics[angle=180,scale=0.5]{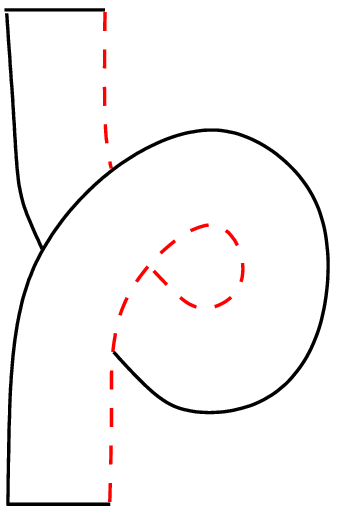}};
 \endxy
\]
Then, pulling it tight, we get a ribbon with a 360 degree twist in it:
\[
 \xy
 (0,0)*{\includegraphics[angle=180,scale=0.5]{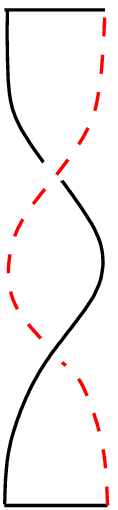}};
 \endxy
\]
And indeed, rotating a particle by 360 degrees gives a phase of $+1$ 
if its spin is an integer, and $-1$ if its spin is a half-integer.  
So this well-known fact is yet another manifestation of the 
real/quaternionic distinction.  We explain these ideas further
in Section~\ref{physics}.  

More broadly, the three-fold way suggests that complex 
quantum theory {\em contains} the real and quaternionic theories.
Indeed, this can be made precise with the help of some
category theory.  Let $\Hilb_\R$, $\Hilb_\C$ and $\Hilb_\H$
stand for the categories of real, complex and quaternionic Hilbert
spaces, respectively.  Then there are `faithful functors' from
$\Hilb_\R$ and $\Hilb_\H$ into $\Hilb_\C$.  This is a way of making 
precise the fact that we can treat real or quaternionic Hilbert 
spaces as complex Hilbert spaces equipped with extra structure.   
The extra structure is just a nondegenerate bilinear form 
$g \maps H \times H \to \C$, which is symmetric in the real case 
and antisymmetric in the quaternionic case.

Since complex quantum theory contains the real and quaternionic
theories, we might be tempted to conclude that the complex theory is
fundamental, with the other theories arising as offshoots.  But 
we can also faithfully map $\Hilb_\C$ and $\Hilb_\H$ into $\Hilb_\R$.  
We can even faithfully map $\Hilb_\R$ and $\Hilb_\C$ into $\Hilb_\H$!  We
describe these faithful functors in Section~\ref{categories}.  So, 
one may argue that no one form of quantum mechanics is primordial: each 
contains the other two!  We provide details in Section~\ref{categories}.  
Then, in Section~\ref{solutions}, we use the relation between the three
forms of quantum theory to solve the problems with real and quaternionic
quantum mechanics raised earlier.

\section{Classifications}
\label{classification}

The laws of algebra constrain the possibilities for theories that
closely resemble the quantum mechanics we know and love.  Various
classification theorems delimit the possibilities.  Of course,
theorems have hypotheses.  It is easy to get around these theorems by
weakening our criteria for what counts as a theory `closely
resembling' quantum mechanics; if we do this, we can find a large
number of alternative theories.  This is especially clear in the
category-theoretic framework, where many theories based on convex sets
of states give rise to categories of physical systems and processes
sharing some features of standard quantum mechanics~\cite{Barnum1,
Barnum2,Barnum3}.  Here, however, we focus on results 
that pick out real, complex and quaternionic quantum mechanics as special.  

Our treatment is far from complete; it is merely meant as a sketch of
the subject.   In particular, while we describe the Jordan algebra approach
to quantum systems with finitely many degrees of freedom, and also the
approach based on convex sets, we neglect the approach based on lattices
of propositions.  Good introductions to this enormous subject include
the paper by Steirteghem and Stubbe~\cite{SS} and the classic texts by
Piron~\cite{Piron} and Varadarajan~\cite{Varadarajan}.

\subsection{Normed division algebras}
\label{division}

After discovering that complex numbers could be viewed as simply pairs
of real numbers, Hamilton sought an algebra of `triples' where
addition, subtraction, multiplication and division obeyed most of the
same rules.  Alas, what he was seeking did not exist.  After much
struggle, he discovered the `quaternions': an algebra consisting of
expressions of the form $a1 + bi + cj + dk$ ($a,b,c,d \in \R$),
equipped with the associative product with unit $1$ uniquely
characterized by these equations:
\[                i^2 = j^2 = k^2 = ijk = -1.   \]
He carved these equations into a bridge as soon he discovered them.
He spent the rest of his life working on the
quaternions, so this algebra is now called $\H$ in his honor.  

The day after Hamilton discovered the quaternions, he sent a letter 
describing them to his college friend John Graves.  A few months later
Graves invented yet another algebra, now called the `octonions'
and denoted $\O$.  The octonions are expressions of the form 
\[          a_0 1 + a_1 e_1 + a_2 e_2 + \cdots + a_7 e_7  \]
with $a_i \in \R$.  The multiplication of octonions is not 
associative, but it is still easily described using the {\bf Fano
plane}, a projective plane with 7 points and 7 lines:

\centerline{\epsfysize=1.5in\epsfbox{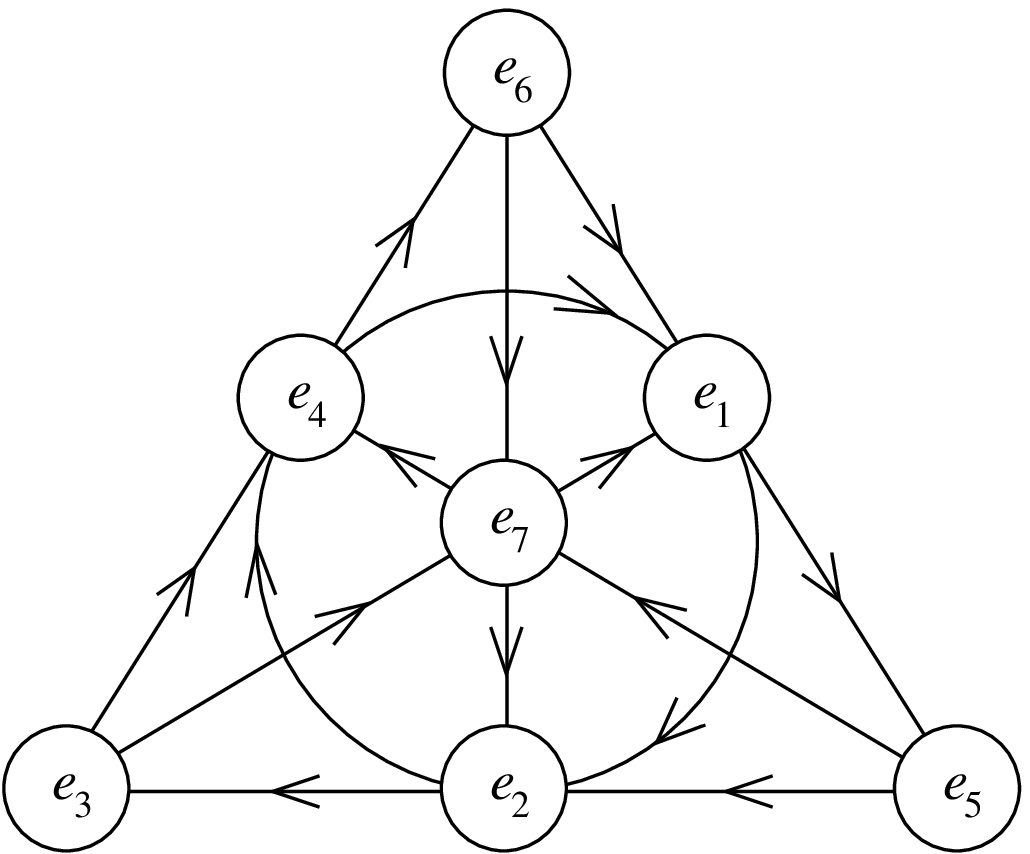}}
\label{Fano}
\medskip

The `lines' here are the sides of the triangle, its altitudes, and the
circle containing the midpoints of the sides.  Each line contains
three points, and each of these triples has a cyclic ordering, as
shown by the arrows.  If $e_i, e_j,$ and $e_k$ are cyclically ordered
in this way then
\[            e_i e_j = e_k,  \qquad e_j e_i = -e_k  . \]
Together with these rules:
\begin{itemize}
\item $1$ is the multiplicative identity,
\item $e_1, \dots, e_7$ are square roots of -1,
\end{itemize}
the Fano plane completely describes the algebra structure of the
octonions.  As an exercise, we urge the reader to check that the
octonions are nonassociative.

What is so great about the number systems discovered by Hamilton and
Graves?  Like the real and complex numbers, they are `normed division
algebras'.  For us, an {\bf algebra} will be a real vector space 
$A$ equipped with a multiplication
\[                     
\begin{array}{ccl}
A \times A & \to & A \\
(x, y)     &\mapsto& xy
\end{array}
\]
that is real-linear in each argument.  We do not assume our
algebras are associative.  We say an algebra $A$ is {\bf 
unital} if there exists an element $1 \in A$ with
\[               1 a = a = a 1 \]
for all $a \in A$.  And we define a
{\bf normed division algebra} to be a 
unital algebra equipped with a norm 
\[                | \cdot | \maps A \to [0, \infty) \]
obeying the rule
\[                |ab| = |a| \, |b|  .\]
As a consequence, we have 
\[     ab = 0 \quad \implies \quad a = 0 \mathrm{ \; or \; } b = 0 .\]

The real and complex numbers are obviously normed division algebras.
For the quaternions we can define the norm to be:
\[          |a + bi + cj + dk| = \sqrt{a^2 + b^2 + c^2 + d^2} . \]
A similar formula works for the octonions:
\[      |a_0 + a_1 e_1 + \cdots + a_7 e_7| = \sqrt{a_0^2 + \cdots + a_7^2}. \]
With some sweat, one can check that these rules make $\H$ and $\O$
into normed division algebras.  

The marvelous fact is that there are no more!  In an 1898 paper,
Hurwitz proved that $\R$, $\C$, $\H$ and $\O$ are the only
finite-dimensional normed division algebras~\cite{Hurwitz}.  In 1960,
Urbanik and Wright~\cite{UW} removed the finite-dimensionality
condition:

\begin{theorem}\label{Urbanik and Wright} Every normed division
algebra is isomorphic to either $\R$, $\C$, $\H$ or $\O$.
\end{theorem}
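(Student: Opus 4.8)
The plan is to reduce the infinite-dimensional Urbanik--Wright theorem to the classical Hurwitz theorem (Theorem stated to be an 1898 result) by showing that any normed division algebra $A$ is automatically finite-dimensional, and in fact has dimension at most $8$. The key structural fact to exploit is that the norm condition $|ab| = |a|\,|b|$ forces every nonzero element to be invertible on both sides, since $ab = 0$ implies $a = 0$ or $b = 0$; thus $A$ is a genuine division algebra in the algebraic sense. My first step would be to show the norm comes from an inner product, or at least that $A$ carries enough rigidity that multiplication by a unit-norm element is an isometry. Concretely, for any $a$ with $|a| = 1$, left multiplication $L_a \maps A \to A$ preserves the norm, hence is an isometry of the normed space $(A, |\cdot|)$, and similarly for right multiplication.

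Next I would aim to produce a composition-algebra structure. The strategy is to polarize. If I can establish that the norm satisfies the parallelogram law, then it arises from an inner product $\langle \cdot, \cdot \rangle$, and the multiplicativity $|ab|=|a|\,|b|$ becomes the composition-algebra identity $\langle ab, ab\rangle = \langle a,a\rangle\langle b,b\rangle$. At that point one is in the setting of Hurwitz's theorem exactly. The danger is that an arbitrary norm need not satisfy the parallelogram law, so the real content is to \emph{derive} the inner-product structure from the algebraic axioms rather than assume it. Here I would lean on the one-dimensional subalgebra generated by a single element together with the unit: for any $a \in A$, the closed subalgebra generated by $1$ and $a$ is a commutative normed division algebra, and one can analyze it using the isometry property of $L_a$ to pin down relations like a Cayley--Hamilton-type identity $a^2 - 2\,\Re(a)\,a + |a|^2 = 0$, forcing each element to satisfy a real quadratic. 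This quadratic relation is what ultimately encodes the inner product via $\Re$.

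The main obstacle, and the genuinely infinite-dimensional part of the argument, is passing from ``every element satisfies a quadratic over $\R$'' to ``the whole algebra is finite-dimensional.'' The plan here is to consider the subspace of ``imaginary'' elements $\Im A = \{a : a^2 \in (-\infty,0]\cdot 1\}$, show it is closed under a suitable anticommutator bracket, and use the composition identity to prove this bracket makes $\Im A$ into a space where any two elements generate a subalgebra isomorphic to a subalgebra of $\H$ or $\O$. The crucial estimate is that the norm multiplicativity prevents the existence of too many mutually ``orthogonal'' imaginary units: if there were infinitely many, one could build an element violating $|ab| = |a|\,|b|$. Once finite-dimensionality is secured, Hurwitz's theorem finishes the classification, identifying $A$ with $\R$, $\C$, $\H$, or $\O$.

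I expect the hardest step to be making the finite-dimensionality argument fully rigorous without circularity, since the quadratic relation alone does not obviously bound the dimension---one must genuinely use the submultiplicativity of the norm to cap the number of independent generators. An alternative route, which may be cleaner, is to first prove directly that $L_a$ being an isometry for all unit $a$ forces $A$ to be an alternative algebra (the subalgebra generated by any two elements is associative), and then invoke the known classification of alternative normed composition algebras; but establishing alternativity in the infinite-dimensional setting requires the same quadratic identity as its foundation, so the core difficulty is unchanged.
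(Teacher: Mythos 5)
Note first that the paper does not prove this theorem at all: it is quoted from Urbanik and Wright~\cite{UW}, so there is no in-paper argument for your proposal to match, and it must stand on its own. As an outline it points in the historically correct direction---the known proofs do run through showing that every element satisfies a real quadratic, that the norm is Euclidean, and then invoking a Hurwitz/Cayley--Dickson style classification---but several of your steps are either false as stated or defer exactly the point that carries the theorem.

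Concretely: (1) Your ``key structural fact,'' that $ab=0 \Rightarrow a=0$ or $b=0$ makes $A$ a division algebra in the algebraic sense, is precisely the inference that fails in infinite dimensions. For $|a|=1$ the map $L_a$ is an injective linear isometry, but injective isometries need not be surjective (the unilateral shift on $\ell^2$ is the standard example), and surjectivity of $L_a$ and $R_a$ is what ``division algebra'' means; that unital normed division algebras in the paper's sense are division algebras in the algebraic sense is only known \emph{after} the classification. (2) You assert that the closed subalgebra generated by $1$ and $a$ is commutative. The algebra is not assumed associative, commutative, or even power-associative, so that subalgebra contains all iterated products of $a$ under arbitrary parenthesizations, and there is no a priori reason that, say, $a\cdot a^2 = a^2\cdot a$; commutativity here is a claim needing proof, not a remark. (3) Most importantly, the step on which everything rests---deriving the quadratic identity $a^2 - 2\,\Re(a)\,a + |a|^2 = 0$, equivalently extracting an inner product from the bare multiplicativity of the norm---is simply asserted (``one can analyze it using the isometry property of $L_a$''). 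The isometry property alone does not yield it, and producing this identity is precisely the hard content of Urbanik and Wright's proof; your sketch contains no mechanism for it, and you acknowledge as much at the end. By contrast, once that step is granted, the remainder is routine, and your separate finite-dimensionality argument (``infinitely many orthogonal imaginary units would violate multiplicativity'') is both vague and unnecessary: the Cayley--Dickson doubling argument for unital composition algebras bounds the dimension by $8$ on its own, with no prior finiteness hypothesis. So as written the proposal reduces the theorem to its own crux rather than proving it.
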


For quantum mechanics, it is important that every normed division
algebra is a {\bf{\boldmath{$\ast$}}-algebra}, meaning it is equipped
with a real-linear map
\[                     
\begin{array}{ccl}
A & \to & A \\
x   &\mapsto& x^*
\end{array}
\]
obeying these rules:
\[   (xy)^* = y^* x^* , \qquad           (x^*)^* = x. \]
For $\C$ this map is just complex conjugation, while for
$\R$ it is the identity map.   For the quaternions it is given by:
\[      (a1 + bi + cj + dk)^* = a - bi - cj - dk \]
and for the octonions,
\[      (a_0 1 + a_1 e_1 + \cdots + a_7 e_7)^* = 
 a_0 1 - a_1 e_1 - \cdots - a_7 e_7  .\]
One can check in all four cases we have
\[              xx^* = x^* x = |x|^2 1 .\]

For the three associative normed division algebras, the $\ast$-algebra
structure lets us set up a theory of Hilbert spaces.  Let us quickly
sketch how.  Suppose $\K$ is an associative normed division algebra.
Then we define a {\bf \boldmath $\K$-vector space} to be a right
$\K$-module: that is, an abelian group $V$ equipped with a map
\[                     
\begin{array}{ccl}
V \times \K & \to & V \\
(v, x)     &\mapsto& vx
\end{array}
\]
that obeys the laws
\[           (v + w)(x) = vx + wx, \qquad v(x + y) = vx + vy, \]
\[         (vx)y = v(xy) .\]
We say a map $T \maps V \to V'$ between $\K$-vector spaces 
is {\bf \boldmath $\K$-linear} if 
\[                T(vx + wy) = T(v)x + T(w)y  \]
for all $v,w \in V$ and $x,y \in \K$.   When no confusion can arise,
we call a $\K$-linear map $T \maps V \to V'$ a {\bf linear operator}
or simply an {\bf operator}.

The reader may be appalled that we are multiplying by scalars on the
{\it right} here.  It makes no difference except for the quaternions,
which are noncommutative.  Even in that case, we could use 
either left or right multiplication by scalars, as long as we stick to
one convention.  But since we write the operator $T$ on the left in 
the expression $T(v)$, it makes more sense to do scalar
multiplication on the right, so no symbols trade places 
in the law $T(vx) = T(v)x$.

There is a category of $\K$-vector spaces and operators between
them.  Even in the quaternionic case every $\K$-vector space has 
a basis, and any two bases have the same
cardinality, so we can talk about the dimension of a vector space over
$\K$, and every finite-dimensional vector space over $\K$ is
isomorphic to $\K^n$.  Every operator $T \maps \K^n \to \K^m$ can be
written as matrix:
\[                (Tv)_i = \sum_j T_{ij} v_j . \]

The $\ast$-algebra structure of the normed division algebras
becomes important when we study Hilbert spaces.
We define an {\bf inner product} on a $\K$-vector space $V$ to
be a map
\[
\begin{array}{ccl}
V \times V & \to & \K \\
(v,w)   &\mapsto& \langle v,w \rangle
\end{array}
\]
that is $\K$-linear in the second argument, has
\[      \langle v, w \rangle = \langle w, v \rangle^* \]
for all $v, w \in V$, and is {\bf positive definite}:
\[          \langle v, v \rangle \ge 0 \]
with equality only if $v = 0$.
As usual, an inner product gives a norm:
\[            \|v\| = \sqrt{\langle v, v\rangle} , \]
and we say a $\K$-vector space with inner product is a {\bf \boldmath
$\K$-Hilbert space} if this norm makes the vector space into a complete
metric space.  Every finite-dimensional $\K$-Hilbert space is isomorphic
to $\K^n$ with its standard inner product
\[             \langle v, w \rangle = \sum_{i = 1}^n v_i^* w_i .\]

This is fine for $\R$, $\C$ and $\H$: what about $\O$?  It would be nice
if we could think of $\O^n$ as an $n$-dimensional octonionic vector space,
with scalar multiplication defined in the obvious way.
Unfortunately, the law
\[          (vx)y = v(xy)  \]
fails, because the octonions are nonassociative!  Furthermore, there are
no maps $T \maps \O^n \to \O^m$ obeying
\[          T(vx) = T(v)x  \]
except the zero map.   So, nobody has managed to develop a good theory
of octonionic linear algebra.
 
While the octonions are nonassociative, they are still {\bf alternative}.
That is, the associative law holds whenever two of the elements being
multiplied are equal:
\[    (xx)y = x(xy) \qquad (xy)x = x(yx) \qquad (yx)x = y(xx)  .\]
Surprisingly, this is enough to let us carry out a bit of quantum 
theory as if $\O$, $\O^2$ and $\O^3$ were well-defined 
octonionic Hilbert spaces.  The 3-dimensional case is by far the most
exciting: it leads to a structure called the `exceptional Jordan 
algebra'.  But beyond dimension 3, it seems there is little to say.

\subsection{Jordan algebras}
\label{jordan}

In 1932, Pascual Jordan tried to isolate some axioms that an `algebra
of observables' should satisfy~\cite{Jordan}.  The unadorned phrase
`algebra' usually signals an associative algebra, but this is not the
kind of algebra Jordan was led to.  In both classical and quantum
mechanics, observables are closed under addition and multiplication by
real scalars.  In classical mechanics we can also multiply
observables, but in quantum mechanics this becomes problematic.  After
all, given two bounded self-adjoint operators on a complex Hilbert
space, their product is self-adjoint if and only if they commute.  

However, in quantum mechanics one can still raise an observable
to a power and obtain another observable.  From squaring and
taking real linear combinations, one can construct a commutative product:
\[     a \circ b = \frac{1}{2}((a+b)^2 - a^2 - b^2)
                  = \frac{1}{2}(ab + ba) . \] 
This product is not associative, but it is {\bf power-associative}: any way 
of parenthesizing a product of copies of the same observable $a$ gives
the same result.  This led Jordan to define what is now called a {\bf
formally real Jordan algebra}: a real vector space with a bilinear,
commutative and power-associative product satisfying
\[  a_1^2 + \cdots + a_n^2 = 0 \quad \implies \quad a_1 = \cdots = a_n = 0  \]
for all $n$.  It turns out that every finite-dimensional formally real 
Jordan algebra is in fact unital.  Moreover, the last condition gives 
$A$ a partial ordering: if we write $a \le b$ when the element 
$b - a$ is a sum of squares, it says 
\[   a \le b \textrm{\; and \; } b \le a \; \quad \implies \quad a = b .  \]
So, in a formally real Jordan algebra it makes sense to speak of one
observable being `greater' than another.

In 1934, Jordan published a paper with von Neumann and Wigner
classifying finite-dimensional formally real Jordan algebras~\cite{JNW}.  
They began by proving that any such algebra is a direct
sum of `simple' ones.  A formally real Jordan algebra is {\bf simple}
when it has exactly two different ideals, $\{0\}$ and $A$ itself.
Here, as usual, an {\bf ideal} is a vector subspace $B \subseteq A$ 
such that $b \in B$ implies $a \circ b \in B$ for all $a \in A$.  

And then, they proved:
\begin{theorem}  Every simple finite-dimensional formally real Jordan
algebra is isomorphic to exactly one on this list:
\begin{itemize}
\item The algebras $\h_n(\R)$ of $n \times n$ self-adjoint real 
matrices with product $a \circ b = \frac{1}{2}(ab + ba)$, where $n \ge 3$.
\item The algebras $\h_n(\C)$ of $n \times n$ self-adjoint complex 
matrices with product $a \circ b = \frac{1}{2}(ab + ba)$, where $n \ge 3$.
\item The algebras $\h_n(\H)$ of $n \times n$ self-adjoint quaternionic 
matrices with product $a \circ b = \frac{1}{2}(ab + ba)$, where $n \ge 3$.
\item The algebra $\h_3(\O)$ of $3 \times 3$ self-adjoint octonionic 
matrices with the product $a \circ b = \frac{1}{2}(ab + ba)$.
\item The {\bf spin factors} for $n \ge 0$, namely the 
vector spaces $\R^n \oplus \R$ with product
\[  (x,t) \circ (x', t') =
(t x' + t' x, x \cdot x' + tt').  \]
\end{itemize}
\end{theorem}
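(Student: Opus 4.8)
The plan is to exploit the interplay between idempotents and power-associativity, reducing the classification to a coordinatization problem in the spirit of Jacobson. Write $x \circ y$ for the Jordan product; call $e$ an \emph{idempotent} if $e \circ e = e$, and two idempotents \emph{orthogonal} if $e \circ f = 0$. First I would develop the spectral theory forced by formal reality: since the product is power-associative, each element $a$ generates an associative commutative subalgebra, and formal reality forbids nilpotents there, so $a$ has a spectral decomposition $a = \sum_i \lambda_i e_i$ as a real combination of pairwise orthogonal idempotents. This already shows $A$ is unital, that maximal families of orthogonal nonzero idempotents are finite and sum to $1$, and that refining such a family to \emph{primitive} idempotents (ones not splitting as a sum of two nonzero orthogonal idempotents) yields a \textbf{Jordan frame} $e_1, \dots, e_r$. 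The integer $r$ will become the matrix size $n$, or the rank of a spin factor.

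Next comes the Peirce decomposition. For a single idempotent $e$, after linearizing the relevant identities one checks that the multiplication operator $L_e \maps a \mapsto e \circ a$ satisfies $L_e (2L_e - 1)(L_e - 1) = 0$, so its only eigenvalues are $0, \tfrac12, 1$ and $A = A_1(e) \oplus A_{1/2}(e) \oplus A_0(e)$. Applying this simultaneously along the frame splits $A = \bigoplus_{i \le j} A_{ij}$, where $A_{ii}$ is the $1$-eigenspace of $L_{e_i}$ and $A_{ij}$ (for $i < j$) is the common $\tfrac12$-eigenspace of $L_{e_i}$ and $L_{e_j}$. Primitivity together with formal reality forces each diagonal block to be one-dimensional, $A_{ii} = \R e_i$, since $A_{ii}$ is itself a formally real Jordan algebra whose only nonzero idempotent is $e_i$. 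Simplicity then guarantees the frame is \emph{connected}, so every $A_{ij}$ is nonzero, and one shows all the off-diagonal blocks are isomorphic as real vector spaces to a single \emph{coordinate algebra} $\K$, with the Jordan product reassembling the blocks exactly like the symmetrized matrix product $\tfrac12(ab + ba)$ of Hermitian matrices over $\K$.

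The crux, and where I expect the real work to lie, is the coordinatization itself. When $r \ge 3$, the presence of three mutually orthogonal primitive idempotents lets one transport elements between the blocks $A_{ij}$, $A_{jk}$, $A_{ik}$ and distill from the Jordan identities an honest bilinear multiplication on $\K$ together with a conjugation $x \mapsto x^*$; formal reality makes the associated norm positive-definite and multiplicative, and the three-idempotent relations force $\K$ to be \textbf{alternative}. Thus $\K$ is a normed division algebra, and Theorem~\ref{Urbanik and Wright} pins it down to $\R$, $\C$, $\H$, or $\O$. A final associativity obstruction---the failure of $4 \times 4$ Hermitian octonionic matrices to satisfy the Jordan identity---shows $\O$ can occur only at $r = 3$, yielding $\h_3(\O)$ together with the families $\h_n(\R)$, $\h_n(\C)$, $\h_n(\H)$ for $n \ge 3$.

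The low-rank cases need separate treatment, precisely because no three-idempotent constraint is available. Rank $r = 1$ gives $A \iso \R$. For $r = 2$ the coordinatization argument breaks down, so instead I would analyze $A = \R e_1 \oplus \R e_2 \oplus A_{12}$ directly: the Jordan product equips the $\tfrac12$-block $A_{12}$ with a positive-definite symmetric form, and tracing through the identities identifies $A$ with a spin factor $\R^n \oplus \R$. Finally I would record that the restriction to $n \ge 3$ in the statement is what makes the list irredundant, removing coincidences such as $\h_1(\R) \iso \R$ and $\h_2(\R) \iso \R^2 \oplus \R$, so that every simple algebra appears exactly once.
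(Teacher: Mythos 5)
The paper contains no proof of this theorem to compare yours against: Baez states the classification as a known result, citing the original 1934 paper of Jordan, von Neumann and Wigner, and the surrounding text only records that such algebras split as direct sums of simple ones, plus (later) the low-rank coincidences such as $\h_2(\R) \iso \R^2 \oplus \R$ and $\h_1(\K) \iso \R$. What you have written is therefore a reconstruction of the classical argument itself, and as an outline it is the correct one: formal reality plus power-associativity gives spectral decompositions, unitality, and Jordan frames; the identity $L_e(2L_e - 1)(L_e - 1) = 0$ gives the Peirce decomposition with eigenvalues $0, \tfrac{1}{2}, 1$; primitivity and formal reality collapse the diagonal blocks to $\R e_i$; simplicity connects the frame; coordinatization at rank $\ge 3$ produces an alternative division $\ast$-algebra with positive-definite multiplicative norm, which the Urbanik--Wright theorem (Theorem~\ref{Urbanik and Wright}) pins to $\R, \C, \H, \O$; associativity is forced at rank $\ge 4$, confining $\O$ to $\h_3(\O)$; rank $2$ yields the spin factors and rank $1$ yields $\R$, the spin factor with $n = 0$. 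Two caveats about where the weight actually sits. First, the step you correctly flag as the crux---the coordinatization---is Jacobson's coordinatization theorem, and it is by a wide margin the hardest part; likewise, the multiplicativity of the norm on the coordinate algebra (what makes it a \emph{composition} algebra rather than merely a division algebra with involution and positive trace form) requires genuine work with the Jordan identities, not just an appeal to formal reality. Your sketch defers these rather than supplies them, which is acceptable for a plan but is exactly what a complete write-up would have to furnish. Second, for the ``exactly one'' clause you would also need to verify that the listed algebras are pairwise non-isomorphic (dimension and rank invariants suffice) and that each entry really is simple and formally real; your closing remark about the $n \ge 3$ convention handles the redundancies but not these verifications.
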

\noindent
Here we say a square matrix $T$ is {\bf self-adjoint} if $T_{ji} = 
(T_{ij})^*$.  For $\K = \R,\C,$ or $\H$, we can identify a 
self-adjoint $n \times n$ matrix with an operator
$T \maps \K^n \to \K^n$ that is {\bf self-adjoint} in the sense
that 
\[         \langle T v, w \rangle = \langle v, T w \rangle  \]
for all $v,w \in \K^n$.  In the octonionic case we do not know
what Hilbert spaces and operators are, but we can still work 
with matrices.  

The $1 \times 1$ self-adjoint matrices with entries in any normed
division algebra form a Jordan algebra isomorphic to the real numbers.
This is also isomorphic to the spin factor with $n = 0$.  
As we shall see very soon, 
the $2 \times 2$ self-adjoint matrices with entries in any normed 
division algebra also form a Jordan algebra isomorphic to a spin factor.  
Curiously, in the octonionic case we cannot go beyond $3 \times 3$ 
self-adjoint matrices and still get a Jordan algebra.  The 
$3 \times 3$ self-adjoint octonionic matrices form a 27-dimensional 
formally real Jordan algebra called the {\bf exceptional Jordan algebra}.

What does all this mean for physics?  The spin factors are related
to Clifford algebras and spinors, hence their name.  They also have an
intriguing relation to special relativity, since $\R^n \oplus \R$ can
be identified with $(n+1)$-dimensional Minkowski spacetime, and its
cone of positive elements is then revealed to be none other than 
the future lightcone.  Furthermore, we have some interesting coincidences:
\begin{itemize}
\item  
The Jordan algebra $\h_2(\R)$ is
isomorphic to the spin factor $\R^2 \oplus \R$.
\item
The Jordan algebra $\h_2(\C)$ is
isomorphic to the spin factor $\R^3 \oplus \R$.
\item 
The Jordan algebra $\h_2(\H)$ is 
isomorphic to the spin factor $\R^5 \oplus \R$.
\item
The Jordan algebra $\h_2(\O)$ is
isomorphic to the spin factor $\R^9 \oplus \R$.
\end{itemize}
This sets up a relation between the real numbers, complex numbers, 
quaternions and octonions and the Minkowski spacetimes of dimensions
3,4,6 and 10.  These are precisely the dimensions where a classical
superstring Lagrangian can be written down!  Far from being a 
coincidence, this is the tip of a huge and still not fully fathomed
iceberg, which we have discussed elsewhere~\cite{Octonions,SUSY1,SUSY2}.  

The exceptional Jordan algebra remains mysterious.  Practically ever
since it was discovered, physicists have looked for some application
of this entity.  For example, when it was first found that quarks come
in three colors, Okubo and others hoped that $3 \times 3$ self-adjoint
octonionic matrices might serve as observables for these exotic
degrees of freedom~\cite{Okubo}.  Alas, nothing much came of this.
More recently, people have discovered some relationships between
10-dimensional string theory and the exceptional Jordan algebra,
arising from the fact that the $2 \times 2$ self-adjoint octonionic
matrices can be identified with 10-dimensional Minkowski spacetime~\cite{CH}.  
But nothing definitive has emerged so far.

In 1983, Zelmanov generalized the Jordan--von Neumann--Wigner
classification to the infinite-dimensional case, working with Jordan
algebras that need not be formally real~\cite{Zelmanov}.  In any
formally real Jordan algebra, the following peculiar law holds:
\[           (a^2 \circ b) \circ a = a^2 \circ (b \circ a).  \]
Any vector space with a commutative bilinear product obeying this law
is called a {\bf Jordan algebra}.  Zelmanov classified the simple
Jordan algebras and proved they are all of three kinds: a kind
generalizing Jordan algebras of self-adjoint matrices, a kind
generalizing spin factors, and an `exceptional' kind.  For a good
introduction to this work, see McCrimmon's book~\cite{McCrimmon}.

Zelmanov's classification theorem does not highlight the special role
of the reals, complex numbers and quaternions.  Indeed, when we drop
the `formally real' condition, a host of additional finite-dimensional
simple Jordan algebras appear, beside those in Jordan, von Neumann and
Wigner's classification theorem.  While it is an algebraic {\it tour 
de force}, nobody has yet used Zelmanov's theorem to shed new light 
on quantum theory.  So, we postpone infinite-dimensional considerations 
until Section~\ref{soler}, where we discuss the remarkable theorem of 
Sol\`er.

\subsection{Convex cones}

The formalism of Jordan algebras seems rather removed from the actual
practice of physics, because in quantum theory we hardly ever take two
observables $a$ and $b$ and form their Jordan product $\frac{1}{2}(ab +
ba)$.  As hinted in the previous section, it is better to think of
this operation as derived from the process of {\it squaring} an
observable, which is something we actually do.  But still, we cannot
help wondering: does the classification of finite-dimensional formally
real Jordan algebras, and thus the special role of normed division
algebras, arise from some axiomatic framework more closely tied to
quantum physics as it usually practiced?

One answer involves the correspondence between states and observables.
Consider first the case of ordinary quantum theory.  If a quantum
system has the Hilbert space $\C^n$, observables are described by
self-adjoint $n \times n$ complex matrices: elements of the Jordan
algebra $\h_n(\C)$.  But matrices of this form that are nonnegative
and have trace 1 also play another role.  They are called {\bf density
matrices}, and they describe {\it states} of our quantum system: not
just pure states, but also more general mixed states.  The idea is
that any density matrix $\rho \in \h_n(\C)$ allows us to define
expectation values of observables $a \in \h_n(\C)$ via
\[                   \langle a \rangle = \tr(\rho a) .\]
The map sending observables to their expectation values is
real-linear.  The fact that $\rho$ is nonnegative is equivalent to
\[                 a \ge 0 \; \implies \; \langle a \rangle \ge 0 \]
and the fact that $\rho$ has trace 1 is equivalent to
\[                  \langle 1 \rangle = 1  .\]

Of course, states lie in the dual of the space of observables:
that much is obvious, given that the expectation value of an
observable should depend linearly on the observable.  The more
mysterious thing is that we can identify the vector space of
observables with its dual:
\[ \begin{array}{ccl}
\h_n(\C) &\cong& \h_n(\C)^*  \\
    a &\mapsto& \langle a, \cdot \rangle 
\end{array}
\]
using the trace, which puts a real-valued inner product on the
space of observables:
\[     \langle a, b \rangle = \tr(ab)  .\]
Thus, states also {\it correspond to} certain observables: the
nonnegative ones having trace 1.  We call this fact the {\bf
state-observable correspondence}.

All this generalizes to an arbitrary finite-dimensional formally real
Jordan algebra $A$.  Every such algebra automatically has an identity
element~\cite{JNW}.  This lets us define a {\bf state} on $A$ to be a
linear functional $\langle \cdot \rangle \maps A \to \R$ that is {\bf
nonnegative}:
\[  a \ge 0  \implies \langle a \rangle \ge 0  \]
and {\bf normalized}:
\[   \langle 1 \rangle = 1 .\]
But in fact, there is a one-to-one correspondence between 
linear functionals on $A$ and elements of $A$.  The reason is that
every finite-dimensional Jordan algebra has a {\bf trace}
\[        \tr \maps A \to \R \]
defined so that $\tr(a)$ is the trace of the linear operator
\[            
\begin{array}{ccl}
A & \to & A  \\
b & \mapsto & a \circ b .
\end{array}
\]       
Such a Jordan algebra is then formally real if and only if
\[        \langle a, b \rangle = \tr(a \circ b) \]
is a real-valued inner product.   So, when $A$ is a finite-dimensional
formally real Jordan algebra, any linear functional 
$\langle \cdot \rangle \maps A \to \R$ can be written as
\[          \langle a \rangle = \tr(\rho \circ a) \]
for a unique element $\rho \in A$.  Conversely, every element $\rho
\in A$ gives a linear functional by this formula.  While not obvious,
it is true that the linear functional $\langle \cdot \rangle$ is
nonnegative if and only if $\rho \ge 0$ in terms of the ordering on
$A$.  More obviously, $\langle \cdot \rangle$ is normalized if and
only if $\tr(\rho) = 1$.  So, states can be identified with certain
special observables: namely, those observables $\rho \in A$ with
$\rho \ge 0$ and $\tr(\rho) = 1$.  

In short: whenever the observables in our theory form a
finite-dimensional formally real Jordan algebra, we have a
state-observable correspondence.  But what is the physical meaning of
the state-observable correspondence?  Why in the world should {\it
states} correspond to special {\it observables?}

Here is one attempt at an answer.  Every finite-dimensional formally
real Jordan algebra comes equipped with a distinguished observable,
the most boring one of all: the identity, $1 \in A$.  This is
nonnegative, so if we normalize it, we get an observable
\[             \rho_0 = \frac{1}{\tr(1)} \, 1 \in A \]
of the special kind that corresponds to a state.  This state, say
$\langle \cdot \rangle_0$, is just the normalized trace:
\[           \langle a \rangle_0 = \tr(\rho_0 \circ a) = 
\frac{\tr(a)}{\tr(1)}   .\]
And this state has a clear physical meaning: it is the {\bf state of
maximal ignorance}!  It is the state where we know as little as
possible about our system---or more precisely, at least in the case of
ordinary complex quantum theory, the state where entropy is maximized.

For example, consider $A = \h_2(\C)$, the algebra of observables of
a spin-$\frac{1}{2}$ particle.  Then the space of states is the so-called
`Bloch sphere', really a 3-dimensional ball.  On the surface of this
ball are the pure states, the states where we know as much as possible
about the particle.  At the center of the ball is the state of maximum
ignorance.  This corresponds to the density matrix
\[            \rho_0 = \left( \begin{array}{cc} 
                       \frac{1}{2} & 0 \\
                         0 & \frac{1}{2} 
                       \end{array} \right) .
\]
In this state, when we measure the particle's spin along any axis,
we have a $\frac{1}{2}$ chance of getting spin up and a $\frac{1}{2}$
chance of spin down.

Returning to the general situation, note that $A$ always acts on its
dual $A^*$: given $a \in A$ and a linear functional $\langle \cdot
\rangle$, we get a new linear functional $\langle a \circ \cdot
\rangle$.  This captures the idea, familiar in quantum theory, that
observables are also `operators'.  The state-observable
correspondence arises naturally from this: we can get any state by
acting on the state of complete ignorance with a suitable observable.
After all, any state corresponds to some observable $\rho$, as follows:
\[          \langle a \rangle = \tr(\rho \circ a) \]
So, we can get this state by acting on the state of
maximal ignorance, $\langle \cdot \rangle_0$, by the observable 
$\tr(1) \rho$:
\[         \langle \tr(1)\rho \circ a \rangle_0 = 
\frac{\tr(1)}{\tr(1)} \tr(\rho \circ a) =
\langle a \rangle .\]

So, we see that the state-observable correspondence springs from two
causes.  First, there is a distinguished state, the state of maximal
ignorance.  Second, any other state can be obtained from the state of
maximal ignorance by acting on it with a suitable observable.  

While these thoughts raise a host of questions, they also help
motivate an important theorem of Koecher and Vinberg~\cite{Koecher,
Vinberg}.  The idea is to axiomatize the situation we have just
described, in a way that does not mention the Jordan product in $A$,
but instead emphasizes:
\begin{itemize}
\item the state-observable correspondence, and
\item the fact that `positive' observables, namely those whose
observed values are always positive, form a cone.
\end{itemize}

To find appropriate axioms, suppose $A$ is a finite-dimensional
formally real Jordan algebra.  Then seven facts are always
true.  First, the set of positive observables 
\[              C = \{a \in A \colon a > 0\} \]
is a {\bf cone}: that is, $a \in C$ implies that every positive
multiple of $a$ is also in $C$.  Second, this cone is {\bf convex}:
if $a,b \in C$ then any linear combination $xa + (1-x)b$ with 
$0 \le x \le 1$ also lies in $C$.  Third, it is an open set.  Fourth, it is
{\bf regular}, meaning that if $a$ and $-a$ are both in the closure
$\overline{C}$, then $a = 0$.  This last condition may seem obscure,
but if we note that
\[             \overline{C} = \{ a \in A \colon a \ge 0 \}  \]
we see that $C$ being regular simply means
\[              a \ge 0 \textrm{ \; and \; } -a \ge 0 \quad
\implies \quad a = 0 , \]
a perfectly plausible assumption.

Next recall that $A$ has an inner product; this is what lets us
identify linear functionals on $A$ with elements of $A$.  This also
lets us define the {\bf dual cone}
\[           C^* = \{  a \in A \colon \; \forall b \in C \; 
\langle a,b \rangle > 0 \} \]
which one can check is indeed a cone.  The fifth fact about $C$ is
that it is {\bf self-dual}, meaning $C = C^*$.  This formalizes the
state-observable correspondence, since it means that states correspond
to special observables.  All the elements $a \in C$ are positive
observables, but certain special ones, namely those with 
$\langle a, 1 \rangle = 1$, can also be viewed as states.

The sixth fact is that $C$ is {\bf homogeneous}: given any two points
$a,b \in C$, there is a real-linear transformation $T \maps A
\to A$ mapping $C$ to itself in a one-to-one and onto way, with the
property that $Ta = b$.  This says that cone $C$ is highly
symmetrical: no point of $C$ is any `better' than any other, at least
if we only consider the linear structure of the space $A$, ignoring
the Jordan product and the trace.

From another viewpoint, however, there is a very special point of $C$,
namely the identity $1$ of our Jordan algebra.  And this brings us to
our seventh and final fact: the cone $C$ is {\bf pointed}, meaning
that it is equipped with a distinguished element (in this case $1 \in
C$).  As we have seen, this element corresponds to the `state of
complete ignorance', at least after we normalize it.

In short: when $A$ is a finite-dimensional formally real Jordan
algebra, $C$ is a pointed homogeneous self-dual regular open convex
cone.  

In fact, there is a category of pointed homogeneous self-dual regular
open convex cones, where:
\begin{itemize}
\item  An object is a finite-dimensional real inner product space 
$V$ equipped with a pointed homogeneous self-dual regular open convex 
cone $C \subset V$.
\item   
A morphism from one object, say $(V,C)$, to another, say $(V',C')$,
is a linear map $T \maps V \to V'$ preserving the inner product and
mapping $C$ into $C'$.  
\end{itemize}

Now for the payoff.  The work of Koecher and Vinberg~\cite{Koecher,
Vinberg}, nicely explained in Koecher's 
Minnesota notes~\cite{Koecher2}, shows that:
\begin{theorem} The category of pointed homogeneous self-dual 
regular open convex cones is equivalent to the category of
finite-dimensional formally real Jordan algebras.
\end{theorem}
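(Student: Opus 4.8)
The plan is to establish the equivalence by constructing functors in both directions and exhibiting natural isomorphisms. In one direction the functor is essentially the construction already sketched above: a finite-dimensional formally real Jordan algebra $A$, equipped with its trace inner product $\langle a,b\rangle = \tr(a\circ b)$, is sent to the pair $(A,C)$ with $C = \{a \in A : a > 0\}$. The seven facts recalled earlier show that $C$ is a pointed homogeneous self-dual regular open convex cone, with distinguished point $1 \in C$. On morphisms one checks that a map of Jordan algebras induces a linear map preserving the inner product and carrying the cone into the cone; this is the routine functoriality verification.

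The substance lies in the reverse functor, which must reconstruct a Jordan product from a bare symmetric cone. Here I would follow Koecher and Vinberg: given $(V,C)$, introduce the characteristic function
\[ \varphi(x) = \int_C e^{-\langle x,y\rangle}\, dy, \qquad x \in C, \]
where self-duality $C = C^*$ lets us integrate over $C$ itself. The integral converges because $C$ is a regular open convex cone, and $\varphi$ transforms by $\varphi(gx) = |\det g|^{-1}\varphi(x)$ under every $g \in \Aut(C) := \{g \in \mathrm{GL}(V) : gC = C\}$. Then $\psi = -\log\varphi$ is strictly convex, its Hessian defines an $\Aut(C)$-invariant Riemannian metric, and the gradient map $x \mapsto -\nabla\psi(x)$ is an involutive diffeomorphism of $C$; taking the distinguished point $e$ of the pointed cone as base point, self-duality makes $e$ a fixed point of this symmetry. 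Homogeneity says the identity component $G = \Aut(C)^{\circ}$ acts transitively on $C$, while self-duality forces the stabilizer of $e$ to be $K = G \cap \OO(V)$, so $C \iso G/K$ is a Riemannian symmetric space with base point $e$.

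Now decompose the Lie algebra $\g$ of $G$ as $\g = \mathfrak{k} \oplus \mathfrak{p}$ under the Cartan involution, where $\mathfrak{p}$ consists of the elements of $\g$ that are symmetric for $\langle\cdot,\cdot\rangle$. The evaluation map $\mathfrak{p} \to V$, $X \mapsto Xe$, is a linear isomorphism since $G$ is transitive; letting $L \maps V \to \mathfrak{p}$ be its inverse, define $x \circ y = L(x)\, y$. The crux of the whole proof is to verify—purely from the symmetric-space structure of the cone—that this product is well defined, commutative, has $e$ as unit, and obeys the Jordan identity $(a^2 \circ b)\circ a = a^2 \circ (b\circ a)$. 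Commutativity and the unit law follow from the symmetry of the elements of $\mathfrak{p}$ together with the definition of $L$, but the Jordan identity is where self-duality and the precise form of the characteristic function must do the work, forcing $\mathfrak{p}$ to close under bracketing with $\mathfrak{k}$ in the pattern of a Jordan triple system. One then identifies $C$ with the set of squares (equivalently the component of invertible elements containing $e$), which yields formal reality: if $\sum a_i^2 = 0$ then each $a_i^2 \in \overline{C}$, and regularity of $C$ forces every $a_i = 0$.

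Finally I would check that the two functors are mutually inverse up to natural isomorphism. Starting from a Jordan algebra, forming $C$, and reconstructing the product must return the original Jordan product; this reduces to identifying $L(x)$ with the multiplication operator $b \mapsto x \circ b$, which holds because in a formally real Jordan algebra these operators are exactly the symmetric part of the Lie algebra of $\Aut(C)$. Conversely, passing from a cone to its Jordan algebra and back returns a cone of positive elements coinciding with the original, and naturality in the morphisms is then formal. The main obstacle throughout is this reverse direction—manufacturing a commutative, power-associative (indeed Jordan) product out of nothing but the geometry of the cone—and I expect the verification of the Jordan identity through the characteristic function to be the decisive technical step.
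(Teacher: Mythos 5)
First, a point of comparison: the paper does not prove this theorem at all. It states it and defers to Koecher and Vinberg (and Koecher's Minnesota notes); its only in-text contribution is the preceding verification of the ``seven facts'' about the cone of a formally real Jordan algebra. So your proposal is not competing with an in-paper argument: it is an outline of exactly the classical proof the paper cites --- trace-form functor in one direction; characteristic function, symmetric-space structure $C \iso G/K$, Cartan decomposition $\g = \mathfrak{k} \oplus \mathfrak{p}$, and $x \circ y = L(x)y$ in the other. As a roadmap this is the right one, and your forward functor coincides with the paper's discussion.

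However, the step you yourself call the crux --- the Jordan identity --- is a genuine gap, and the mechanism you gesture at would not close it. The closure relations $[\mathfrak{p},\mathfrak{p}] \subseteq \mathfrak{k}$ and $[\mathfrak{k},\mathfrak{p}] \subseteq \mathfrak{p}$, together with $\mathfrak{k}e = 0$ and injectivity of evaluation on $\mathfrak{p}$, do give $[k, L(z)] = L(kz)$ for $k \in \mathfrak{k}$, so every $[L(x),L(y)]$ acts as a derivation of $\circ$; but the only universal identity this Lie structure then yields --- the Jacobi identity for $L(x), L(y), L(z)$, rewritten as $D_{x,y}z + D_{y,z}x + D_{z,x}y = 0$ --- is automatically true by commutativity, hence vacuous. ``Closing under bracketing in the pattern of a Jordan triple system'' therefore produces nothing. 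The actual mechanism in Koecher's proof is analytic: one shows the gradient map $j = \nabla\psi$ (note the sign: with $\psi = -\log\varphi$ it is $+\nabla\psi$, not $-\nabla\psi$, that carries $C$ into $C^* = C$) is an involution satisfying the equivariance $j(gx) = (g^{T})^{-1} j(x)$ for $g \in \Aut(C)$; differentiating the involutivity and equivariance at the base point produces the quadratic representation $P(x)$ and the fundamental formula $P(P(x)y) = P(x)P(y)P(x)$, from which, given commutativity and the unit, the Jordan identity follows. Some such quantitative use of self-duality is unavoidable, and without it the theorem is unproved.

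Second, your handling of the base point is incorrect as stated. Self-duality does not make an arbitrary distinguished point a fixed point of $j$, nor is its stabilizer $G \cap \OO(V)$. In the cone of positive-definite real symmetric $n \times n$ matrices with the trace inner product, $\varphi(x)$ is proportional to $(\det x)^{-(n+1)/2}$, so $\nabla\psi(x) = \tfrac{n+1}{2}\,x^{-1}$, whose unique fixed point is $\sqrt{(n+1)/2}$ times the identity matrix; a generic distinguished point is not fixed, and its stabilizer is a maximal compact subgroup of $G$ different from $G \cap \OO(V)$. The classical proof first establishes the \emph{existence} of a fixed point of $j$ (by Cartan's fixed-point theorem applied to this isometry of the Hessian metric, or by a minimization argument) and builds the algebra there. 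Reconciling that point with the given distinguished element, and the given inner product with the reconstructed trace form, is exactly the normalization the equivalence requires; this is glossed over in your sketch --- and, to be fair, in the paper's own loose statement, whose morphisms are not even required to preserve the distinguished point.
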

\noindent
This means that the theorem of Jordan, von Neumann and Wigner,
described in the previous section, also classifies the pointed
homogeneous self-dual regular convex cones!
\begin{theorem} Every pointed homogeneous self-dual 
regular open convex cones is isomorphic to a direct sum of 
those on this list:
\begin{itemize}
\item the cone of positive elements in $\h_n(\R)$ for $n \ge 3$,
\item the cone of positive elements in $\h_n(\C)$ for $n \ge 3$,
\item the cone of positive elements in $\h_n(\H)$ for $n \ge 3$,
\item the cone of positive elements in $\h_3(\O)$, 
\item the future lightcone in $\R^n \oplus \R$ for $n \ge 0$.
\end{itemize}
\end{theorem}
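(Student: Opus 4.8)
The plan is to combine the two results stated immediately above. By the Koecher--Vinberg theorem, the category of pointed homogeneous self-dual regular open convex cones is equivalent to the category of finite-dimensional formally real Jordan algebras, the cone attached to an algebra $A$ being its cone of positive elements $C = \{a \in A : a > 0\}$. So it suffices to run through the Jordan--von Neumann--Wigner classification of the algebras and read off the corresponding cones.

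First I would invoke that every finite-dimensional formally real Jordan algebra is a direct sum of simple ones, and check that the equivalence carries this decomposition to a direct sum of cones: since the ordering on $A = A_1 \oplus A_2$ is tested componentwise, an element is positive exactly when each of its components is, so $C = C_1 \oplus C_2$. This reduces the task to describing the positive cone of each simple formally real Jordan algebra. By the theorem of Jordan, von Neumann and Wigner, these are precisely $\h_n(\R)$, $\h_n(\C)$, $\h_n(\H)$ for $n \ge 3$, the exceptional algebra $\h_3(\O)$, and the spin factors $\R^n \oplus \R$.

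For the four families of matrix algebras there is nothing left to prove: the cone in question is by definition the cone of positive elements, and these are exactly the first four items on the list. The only case that requires a computation is the spin factor. Here I would square a general element, $(y,s)^2 = (2sy,\, |y|^2 + s^2)$, and note that $|y|^2 + s^2 \ge 2|s|\,|y| = |2sy|$, so every square---and hence, the cone being convex, every positive element---satisfies $t \ge |x|$ for $(x,t) \in \R^n \oplus \R$. A short converse shows that every $(x,t)$ with $t > |x|$ is in fact positive, so the open cone is $\{(x,t) : t > |x|\}$. Taking $t$ as the time coordinate, this is exactly the open future lightcone in $(n+1)$-dimensional Minkowski space, matching the last item on the list.

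I expect the main obstacle to be bookkeeping rather than depth: one must trust that `positive element' on the Jordan side (a sum of squares, equivalently $a \ge 0$ for the induced ordering) corresponds to membership in the cone on the other side, and that the self-duality is computed with respect to the inner product $\langle a,b \rangle = \tr(a \circ b)$ that makes the algebra formally real. Granting the Koecher--Vinberg equivalence, which is quoted as established, the classification of cones is simply the image of the Jordan-algebra classification, with the spin-factor cones reinterpreted geometrically as future lightcones.
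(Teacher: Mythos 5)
Your proposal is correct and matches the paper's own (very brief) argument: the paper derives this theorem as an immediate corollary of the Koecher--Vinberg equivalence together with the Jordan--von Neumann--Wigner classification, noting only in passing that direct sums of algebras correspond to direct sums of cones and that the spin factor's positive cone is the future lightcone. Your write-up simply fills in the details the paper leaves implicit (the componentwise check for direct sums and the computation $(y,s)^2 = (2sy,\,|y|^2+s^2)$ identifying the spin-factor cone with $\{(x,t): t > |x|\}$), which is a worthwhile but not a different route.
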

\noindent
Some of this deserves a bit of explanation.  For $\K = \R, \C, \H$, 
an element $T \in \h_n(\K)$ is {\bf positive} if and only if the 
corresponding operator $T \maps \K^n \to \K^n$ has 
\[            \langle v, Tv \rangle > 0  \]
for all nonzero $v \in \K^n$.  A similar trick works for defining
positive elements of $\h_3(\O)$, but we do not need the details here.
We say an element $(x,t) \in \R^n \oplus \R$ lies in the 
{\bf future lightcone} if $t > 0$ and $t^2 - x \cdot x > 0$.  This
of course fits in nicely with the idea that the spin factors are
connected to Minkowski spacetimes.  Finally, there is an obvious
notion of direct sum for Euclidean spaces with cones, where the
direct sum of $(V,C)$ and $(V',C')$ is $V \oplus V'$ equipped with 
the cone
\[             C \oplus C' = \{(v,v') \in V\oplus V' \colon \; 
v \in C, v' \in C' \} .\]

In summary, self-adjoint operators on real, complex and quaternionic
Hilbert spaces arise fairly naturally as observables starting from a
formalism where the observables form a cone, and we insist on an
state-observable correspondence.  There is a well-developed approach
to probabilistic theories that works for cones that are neither
self-dual nor homogeneous: see for example the work of Barnum and
coauthors~\cite{Barnum1,Barnum2}.  This has already allowed these
authors to shed new light on the physical significance of 
self-duality~\cite{Barnum3}.  But perhaps further thought on the 
state-observable
correspondence will clarify the meaning of the Koecher-Vinberg theorem,
and help us better understand the appearance of normed division 
algebras in quantum theory.  

Finally, we should mention that in pondering the state-observable
correspondence, it is worthwhile comparing the `state-operator
correspondence'.  This is best known in the context of string
theory~\cite{Polchinski}, but it really applies whenever we have a
$C^*$-algebra of observables, say $A$, equipped with a state $\langle
\cdot \rangle \maps A \to \C$.  Then the Gelfand-Naimark-Segal
construction lets us build a Hilbert space $H$ on which $A$ acts,
together with a distinguished unit vector $v \in H$ called the `vacuum
state'.  The Hilbert space $H$ is built by taking a completion of a
quotient of $A$, so a dense set of vectors in $H$ come from elements
of $A$.  So, we get states from certain observables.  In particular,
the vacuum state $v$ comes from the element $1 \in A$.

This is reminiscent of how in our setup, the state of maximal
ignorance comes from the element $1$ in the Jordan algebra of
observables.  But there are also some differences: for example, the
Gelfand-Naimark-Segal construction requires choosing a state, and it works 
for infinite-dimensional $C^*$-algebras, while our construction works for
finite-dimensional formally real Jordan algebras, which have a
canonical state: the state of maximum ignorance.  Presumably both
constructions are special cases of some more general construction.

\subsection{Sol\`er's theorem}
\label{soler}

In 1995, Maria Pia Sol\`er~\cite{Soler} proved a result which has
powerful implications for the foundations of quantum mechanics.  The
idea of Sol\`er's theorem is to generalize the concept of Hilbert
space beyond real, complex and quaternionic Hilbert spaces, but then
show that under very mild conditions, the only options for {\it
infinite-dimensional} Hilbert spaces are the usual ones!  Our
discussion here is largely based on the paper by Holland~\cite{Holland}.

To state Sol\`er's theorem, we begin by saying what sort of ring
our generalized Hilbert spaces will be vector spaces over.  
We want rings where we can divide, and we want them to have
an operation like complex conjugation. 
A ring is called a {\bf division ring} if every nonzero element
$x$ has an element $x^{-1}$ for which $x x^{-1} = x^{-1} x = 1$.
It is called a {\bf \boldmath{$\ast$}-ring} if it is equipped
with a function $x \mapsto x^*$ for which:
\[    (x + y)^* = x^* + y^*, \qquad (xy)^* = y^* x^* , \qquad
       (x^*)^* = x.\]
A {\bf division \boldmath{$\ast$}-ring} is a $\ast$-ring that
is also a division ring.  

Now, suppose $\K$ is a division $\ast$-ring.  As before, we define
a {\bf \boldmath $\K$-vector space} to be a right $\K$-module.  And
as before, we say a function $T \maps V \to V'$ between $\K$-vector 
spaces is {\bf \boldmath $\K$-linear} if 
\[                T(vx + wy) = T(v)x + T(w)y  \]
for all $v,w \in V$ and $x,y \in \K$.  
We define a {\bf hermitian form} on a $\K$-vector space $V$ to
be a map
\[
\begin{array}{ccl}
V \times V & \to & \K \\
(v,w)   &\mapsto& \langle v,w \rangle
\end{array}
\]
which is $\K$-linear in the second argument and obeys the equation
\[      \langle v, w \rangle = \langle w, v \rangle^* \]
for all $v, w \in V$.  We say a hermitian form is {\bf nondegenerate}
if 
\[          \langle v, w \rangle = 0 
\textrm{ \; for \; all \; } v \in V \quad \iff \quad w = 0 .\]

Suppose $H$ is a $\K$-vector space with a nondegenerate hermitian
form.  Then we can define various concepts familiar from the theory
of Hilbert spaces.  For example, 
we say a sequence $e_i \in H$ is {\bf orthonormal} if $\langle e_i,
e_j \rangle = \delta_{ij}$. 
For any vector subspace $S \subseteq H$, we define
the {\bf orthogonal complement} $S^\perp$ by
\[    S^\perp = \{ w \in H \colon \; \langle v, w \rangle = 0 
\textrm{ \; for \; all \; } v \in S \} .\]
We say $S$ is {\bf closed} if $S^{\perp \perp} = S$.  Finally, we say $H$
is {\bf orthomodular} if $S + S^\perp = H$ for every closed subspace $S$. 
This is automatically true if $H$ is a real, complex, or quaternionic 
Hilbert space.

We are now ready to state Sol\`er's theorem: 
\begin{theorem} Let $\K$ be a division $\ast$-ring, and let
$H$ be a $\K$-vector space equipped with an orthomodular hermitian
form for which there exists an infinite orthonormal sequence.  Then
$\K$ is isomorphic to $\R, \C$ or $\H$, and $H$ is an
infinite-dimensional $\K$-Hilbert space.
\end{theorem}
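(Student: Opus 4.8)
The plan is to wring from the single hypothesis of orthomodularity enough rigidity to pin down both the scalar ring $\K$ and the completeness of $H$. The organizing observation is that in finite dimensions orthomodularity is automatic and constrains $\K$ not at all, so the entire force of the theorem must come from coupling orthomodularity to the \emph{infinite} orthonormal sequence $e_1, e_2, \ldots$; this is exactly what will separate the genuine Hilbert cases $\R, \C, \H$ from the exotic division $\ast$-rings that can otherwise carry a nondegenerate hermitian form.

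First I would erect the lattice-theoretic scaffolding. Orthomodularity gives $S + S^\perp = H$ for every closed subspace $S$, so each closed $S$ carries a projection $P_S$ with range $S$ and kernel $S^\perp$ that is self-adjoint for the form; the closed subspaces then form an orthomodular lattice, and the infinite orthonormal sequence forces this lattice to have infinite height. Every later step aims to show that a lattice this rigid can only be coordinatized by a classical field.

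The technical heart --- and the step I expect to be by far the hardest --- is Sol\`er's lemma. Using the orthonormal sequence together with orthomodularity, one manufactures a large, coherent family of mutually orthogonal vectors of equal length and then reads off algebraic identities among the associated scalars $\langle x, x \rangle$. Concretely, one builds vectors out of the $e_i$ and their images under projections onto cleverly chosen closed subspaces; orthomodularity forces certain partial sums to possess orthogonal complements, and comparing lengths yields relations constraining the symmetric elements $a = a^*$ of $\K$. The payoff is that the fixed field $F$ of the involution on the center of $\K$ --- which in each target case $\R, \C, \H$ is precisely a copy of $\R$ --- is not merely formally real but \emph{Cauchy complete}: a bounded monotone sequence in $F$ corresponds to a nested family of closed subspaces whose orthomodular limit forces the scalars to converge. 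Extracting this completeness from orthomodularity is the delicate analytic--combinatorial core of Sol\`er's original argument, and I would follow the streamlined account in Holland~\cite{Holland}.

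Once completeness is secured the remainder is comparatively routine. A Cauchy complete ordered field is $\R$, so $F \cong \R$ lies centrally in $\K$, making $\K$ a real algebra. For any $x \in \K$ the element $x x^*$ is symmetric, and one checks it lands in the positive cone of $F$, so that $|x| := \sqrt{x x^*}$ is a multiplicative norm; thus $\K$ is a normed division algebra over $\R$, and since it is associative it cannot be $\O$, so Theorem~\ref{Urbanik and Wright} forces $\K \cong \R, \C$ or $\H$. With $\K$ now classical, the Amemiya--Araki theorem --- that an inner product space over $\R, \C$ or $\H$ is orthomodular exactly when it is metrically complete --- shows that the norm $\|v\| = \sqrt{\langle v, v \rangle}$ makes $H$ a complete metric space, i.e.\ an infinite-dimensional $\K$-Hilbert space, the infinite orthonormal sequence guaranteeing that its dimension is indeed infinite.
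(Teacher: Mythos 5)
You should know at the outset that the paper contains no proof of this statement to compare against: this is Sol\`er's theorem, which the paper simply states, attributing it to Sol\`er~\cite{Soler} and directing the reader to Holland's review~\cite{Holland}; its only proof-related remarks are historical (Piron's flawed attempt, repaired by Amemiya and Araki~\cite{AA}). Your proposal, for its part, is a roadmap of Sol\`er's published argument rather than a proof: you explicitly defer the ``delicate analytic--combinatorial core'' to Holland's account, and that core is essentially the whole theorem. The architecture you describe is broadly the right one --- use the infinite orthonormal sequence together with orthomodularity to constrain the scalars, identify the base field with $\R$, classify $\K$, and finish with the Amemiya--Araki theorem that over $\R$, $\C$ or $\H$ an inner product space is orthomodular iff it is metrically complete; that final step is indeed how the actual proof concludes.

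However, the bridging steps you do spell out yourself contain genuine errors. First, ``a Cauchy complete ordered field is $\R$'' is false: the formal Laurent series field $\R((t))$, ordered so that $t$ is a positive infinitesimal, is Cauchy complete in its order topology but is not archimedean, hence not $\R$. What Sol\`er's hard lemma actually extracts from orthomodularity is the \emph{archimedean} property; archimedeanity plus completeness is what pins down $\R$, and eliding this distinction erases the real content of the hard step. Second, your identification of $\K$ via Theorem~\ref{Urbanik and Wright} is unjustified as written: in a general division $\ast$-ring the elements $xx^*$ need not lie in the fixed field $F$ of the center, since symmetric elements need not be central (quaternion algebras, for instance, admit involutions of orthogonal type whose symmetric elements form a three-dimensional, non-central subspace), and you would also need the triangle inequality for $|x| = \sqrt{xx^*}$ before the normed-division-algebra classification applies. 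In the actual proof, identifying $\K$ is a structure theorem for $\ast$-division rings in its own right, not a one-line check. So your proposal should be read as a serviceable summary of where the proof lives in the literature --- which, in effect, is also all the paper does --- but not as a proof.
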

\noindent 
Nothing in the assumptions mentions the continuum: the
hypotheses are purely algebraic.  It therefore seems quite magical
that $\K$ is forced to be the real numbers, complex
numbers or quaternions.  The orthomodularity condition is the key. 
In 1964, Piron had attempted to prove that any orthomodular
complex inner product space must be a Hilbert space~\cite{Piron2}.
His proof had a flaw which Ameniya and Araki fixed a few years 
later~\cite{AA}.  Their proof also handles the real and quaternionic
cases.  Eventually these ideas led Sol\`er to realize that orthomodularity
picks out real, complex, and quaternionic Hilbert spaces as special.

Holland's review article~\cite{Holland} describes the implications of
Sol\`er's results for other axiomatic approaches to quantum theory:
for example, approaches using convex cones and lattices.
This article is so well-written that there is no point in summarizing
it here.  Instead, we turn to some problems with real and
quaternionic quantum theory, which can be solved by treating them as
part of a larger structure---the `three-fold way'---that also includes
the complex theory.

\section{Problems}
\label{problems}

Is complex quantum theory `better' than the real or quaternionic
theories?  Of course it fits the experimental data better.  But people
have also pointed to various internal problems, or at least unfamiliar 
aspects, of the real and quaternionic theories, as a way to justify nature's
choice of the complex theory.  Here we describe two problems, both of which
can be solved using the `three-fold way'.  For more on the special virtues
of complex quantum theory see Hardy~\cite{Hardy} and Vicary~\cite{Vicary}.

One problem with real and quaternionic quantum theory is that the usual
correspondence between strongly continuous
one-parameter unitary groups and self-adjoint
operators breaks down, or at least becomes more subtle.  To see this
with a minimum of distractions, \textit{all Hilbert spaces in this
section will be assumed finite-dimensional}.  The same issues show up
in the infinite-dimensional case, but more technical detail is
required to tell the story correctly.

Suppose $\K$ is $\R,\C$ or $\H$.  Let $H$ and $H'$ be $\K$-Hilbert spaces.
We say an operator $T \maps H \to H'$ is {\bf bounded} if there 
exists a constant $K > 0$ such that
\[      \|Tv\| \le K \, \|v\|  \]
for all $v \in H$.          
We define the {\bf adjoint} of a bounded operator $T \maps H
\to H'$ in the usual way:
\[   \langle u, T^\dagger v \rangle = \langle T u, v \rangle \]
for all $u \in H$, $v \in H'$.   It is easy to check
that $T^\dagger$, defined this way, really is an operator from $H'$
back to $H$.   We define a $\K$-linear operator $U \maps H \to H'$ to 
be {\bf unitary} if $U U^\dagger = U^\dagger U = 1$.  Here we should
warn the reader that when $\K = \R$, the term `orthogonal' is often
used instead of `unitary'---and when $\K = \H$, people sometimes
use the term `symplectic'.  For us it will be more efficient to
use the same word in all three cases.

We define a {\bf one-parameter unitary group} to be a family of
unitary operators $U(t) \maps H \to H$, one for each $t \in \R$, such
that
\[            U(t + t') = U(t) U(t')  \]
for all $t, t' \in \R$.  If $H$ is finite-dimensional, every 
one-parameter unitary group can be written as 
\[           U(t) = \exp(t S)  \] 
for a unique bounded operator $S$.  

It is easy to check that the operator $S$ is {\bf skew-adjoint}: it 
satisfies $S^\dagger = -S$.  In quantum theory we usually want 
our observables to be {\bf self-adjoint} operators, obeying
$A^\dagger = A$.  And indeed, when $\K = \C$, we can write 
any skew-adjoint operator $S$ in terms of a
self-adjoint operator $A$ by setting
\[                  S(v) = A(v) i  .\]
This gives the usual correspondence between one-parameter unitary 
groups and self-adjoint operators.

Alas, when $\K = \R$ we have no number $i$, so we cannot express our 
skew-adjoint $S$ in terms of a self-adjoint $A$.
Nor can we do it when $\K = \H$.  Now the problem
is not a shortage of square roots of $-1$.  Instead, there are too
many---and more importantly, they do not commute!  We can try to set
$A(v) = S(v)i$, but this operator $A$ will rarely be linear.  The reason 
is that because $S$ commutes with multiplication by $j$, $A$ anticommutes
with multiplication by $j$, so $A$ is only linear in the trivial case $S = 0$.

Later we shall see a workaround for this problem.  A second problem,
special to the quaternionic case, concerns tensoring Hilbert spaces.
In ordinary complex quantum theory, when we have two systems, one
with Hilbert space $H$ and one with Hilbert space $H'$, the system
made by combining these two systems has Hilbert space $H \otimes H'$.
Here the tensor product of Hilbert spaces relies on a more primitive
concept: the tensor product of vector spaces.  The tensor product of
bimodules of a noncommutative algebra is another bimodule over that
algebra.  But a quaternionic vector space is not a bimodule: it is
only a left module of the quaternions.  So, the tensor product of
quaternionic Hilbert spaces is not naturally a quaternionic Hilbert
space.  

Unfortunately, this issue is not addressed head-on in Adler's book on
quaternionic quantum theory~\cite{Adler}.  When tensoring two quaternionic
Hilbert spaces, he essentially chooses a way to make one of them
into a bimodule, without being very explicit about this.  There is
no canonical way to make a quaternionic Hilbert space $H$ into a 
bimodule.  Indeed, given one way to do it, we can get 
many new ways as follows:
\[           (xv)_{\mathrm{new}} = \alpha(x) v , \qquad
             (vx)_{\mathrm{new}} = v x , \]
where $v \in H$, $x \in \H$ and $\alpha$ is an automorphism of the
quaternions.  Every automorphism is of the form
\[         \alpha(x) = g x g^{-1}  \]
for some unit quaternion $g$, so the automorphism group of the 
quaternions is $\SU(2)/\{\pm 1\} \cong \SO(3)$.  Thus we can 
`twist' a bimodule structure on $H$ by any element of $\SO(3)$, 
obtaining a new bimodule with the same underlying quaternionic
Hilbert space.
              
In Bartels' work on quaternionic functional analysis, he used
bimodules right from the start~\cite{Bartels}.  Later Ng~\cite{Ng} 
investigated these questions in more depth, and his paper provides 
a nice overview of many different approaches.  Here however we shall take a 
slightly different tack, noting that there is a way to make the tensor product
of quaternionic Hilbert spaces into a {\it real} Hilbert space.  This
fits nicely into the thesis we wish to advocate: that real, complex
and quaternionic quantum mechanics are not separate theories, but fit
together as parts of a unified whole.

\section{The Three-Fold Way}
\label{three-fold}

Some aspects of quantum theory become more visible when we introduce
{\it symmetry}.  This is especially true when it comes to the relation
between real, complex and quaternionic quantum theory.  So, instead of
bare Hilbert spaces, it is useful to consider Hilbert spaces equipped
with a representation of a group.  For simplicity suppose that $G$ is 
a Lie group, where we count a discrete group as a 0-dimensional Lie group.
Let $\Rep(G)$ be the category where:
\begin{itemize}
\item An object is a finite-dimensional complex Hilbert space $H$
equipped with a continuous unitary representation of $G$, 
say $\rho \maps G \to \U(H)$.  
\item A morphism is an operator $T \maps
H \to H'$ such that $T \rho(g) = \rho'(g) T$ for all $g \in G$.
\end{itemize}
To keep the notation simple, we often call a representation $\rho \maps
G \to \U(H)$ simply by the name of its underlying Hilbert space, $H$.
Another common convention, used in the introduction, is to call
it $\rho$.

Many of the operations that work for finite-dimensional Hilbert spaces 
also work for $\Rep(G)$, with the same formal properties: for example,
direct sums, tensor products and duals.  This lets us formulate
the {\bf three-fold way} as follows.   If an object $H \in \Rep(G)$
is \textbf{irreducible}---not a direct sum of other representations 
in a nontrivial way---there are three mutually exclusive choices:

\begin{itemize}
\item
The representation $H$ is not isomorphic to its dual.  In this
case we call it \textbf{complex}.
\item 
The representation $H$ is isomorphic to its dual and 
it is \textbf{real}: it arises from a representation of 
$G$ on a real Hilbert space $H_\R$ as follows:
\[             H = H_\R \otimes \C  \]
\item
The representation $H$ is isomorphic to its dual
and it is \textbf{quaternionic}: it comes from a representation 
of $G$ on a quaternionic Hilbert space $H_\H$ with
\begin{center}
$H =$ the underlying complex representation of $H_\H$
\end{center}
\end{itemize}

Why?  Suppose that $H \in \Rep(G)$ is irreducible.
Then there is a 1-dimensional space of
morphisms $f \maps H \to H$, by Schur's Lemma.  Since $H \iso H^*$,
there is also a 1-dimensional space of morphisms $T \maps H \to H^*$, and
thus a 1-dimensional space of morphisms
\[       g \maps H \otimes H \to \C \,  \]
We can also think of these as bilinear maps
\[       g \maps H \times H \to \C \]
that are invariant under the action of $G$ on $H$.  

But the representation $H \otimes H$ is the direct sum of two others:
the space $S^2 H$ of symmetric tensors, and the space $\Lambda^2 H$ of
antisymmetric tensors:
\[        H \otimes H \iso S^2 H \oplus \Lambda^2 H \,. \]
So, either there exists a nonzero $g$ that is {\bf symmetric}:
\[       g(v,w) = g(w,v)   \]
or a nonzero $g$ that is {\bf antisymmetric}:
\[      g(v,w) = -g(w,v) \, .  \]
One or the other, not both!---for if we had both, the space
of morphisms $g \maps H \otimes H \to \C$ would be at least
two-dimensional.

Either way, we can write
\[    g(v,w) = \langle J v, w \rangle \]
for some function $J \maps H \to H$.  Since $g$ and the inner product
are both invariant under the action of $G$, this function $J$ must commute
with the action of $G$.  But note that since $g$ is linear in
the first slot, while the inner product is not, $J$ must be 
\textbf{antilinear}, meaning
\[                J(vx + wy) = J(v)x^* + J(w)y^*  \]
for all $v,w \in V$ and $x,y \in \K$.   

The square of an antilinear operator is linear.   Thus, $J^2$ is 
linear and it commutes with the action of $G$.  By Schur's Lemma, it 
must be a scalar multiple of the identity:
\[  J^2 = c \]
for some $c \in \C$.  We wish to show that depending on whether
$g$ is symmetric or antisymmetric, we can rescale $J$ to achieve either
$J^2 = 1$ or $J^2 = -1$.  To see this, first note that
depending on whether $g$ is symmetric or antisymmetric, we have
\[   \pm g(v,w) = g(w,v) \]
and thus
\[   \pm \langle Jv, w\rangle = \langle Jw, v \rangle . \]
Choosing $v = Jw$, we thus have
\[   \pm \langle J^2 w,w\rangle = \langle Jw , Jw \rangle \ge 0. \]
It follows that $\pm J^2$ is a positive operator, so $c > 0$.  
Thus, dividing $J$ by
the positive square root of $c$, we obtain a new antilinear
operator --- let us again call it $J$ --- with $J^2 = \pm 1$.

Rescaled this way, $J$ is {\bf antiunitary}: it is an invertible
antilinear operator with 
\[    \langle Jv, Jw\rangle = \langle w, v \rangle . \]
Now consider the two cases:
\begin{itemize}
\item If $g$ is symmetric, $H$ is equipped with a {\bf real structure}: 
an antiunitary operator $J$ with
\[        J^2 = 1 .  \]
Furthermore $J$ commutes with the action of $G$.  
It follows that the real Hilbert space
\[       H_\R = \{  x \in H \colon \; Jx = x  \} \]
is a representation of $G$ whose tensor product with $\C$ is
$H$.
\item
If $g$ is antisymmetric, $H$ is equipped with a {\bf quaternionic
structure}: an antiunitary operator $J$ with 
\[       J^2 = -1 . \]
Furthermore $J$ commutes with the action of $G$.
Since the operators $I = i$, $J$, and $K = IJ$ obey the 
usual quaternion relations, $H$ can be made into a quaternionic
Hilbert space $H_\H$.  There is a representation of $G$ on $H_\H$ 
whose underlying complex representation is $H$.
\end{itemize}
\noindent
Note that in both cases, $g$ is {\bf nondegenerate}, meaning
\[        \forall v \in V \;\;  g(v,w) = 0 \quad \implies \quad w = 0 .\]
The reason is that $g(v,w) = \langle J v, w \rangle$, and the inner
product is nondegenerate, while $J$ is one-to-one.  

We can describe real and quaternionic representations using either
$g$ or $J$.  In the following statements, we do not need the representation
to be irreducible:

\begin{itemize}
\item
Given a complex Hilbert space $H$, a nondegenerate
symmetric bilinear form $g \maps H \times H \to \C$ is called an 
{\bf orthogonal structure} on $H$.  So, a representation 
$\rho \maps G \to \U(H)$ is real iff it preserves some orthogonal
structure $g$ on $H$.  Alternatively, $\rho$ is real iff there is
a real structure $J \maps H \to H$ that commutes with the 
action of $G$.  
\item
Similarly, given a complex Hilbert space $H$, a nondegenerate
skew-symmetric bilinear form $g \maps H \times H \to \C$ is called a
{\bf symplectic structure} on $H$.  So, a representation
$\rho \maps G \to \U(H)$ is quaternionic iff it preserves some 
symplectic structure $g$ on $H$.  Alternatively, 
$\rho$ is quaternionic iff there is a quaternionic structure 
$J \maps H \to H$ that commutes with the action of $G$.  
\end{itemize}

We can summarize these patterns as follows:

\begin{center}
THE THREEFOLD WAY
\end{center}
\vskip 1em
\begin{center}
\begin{tabular}{ccccccc}                    
complex  &$\quad$ & $H \ncong H^*$  &$\quad$& unitary   
\\    \\
real   && $H \cong H^*$  && orthogonal  \\
       && $J^2 = 1$  
\\  \\
quaternionic   && $H \cong H^*$ && symplectic \\
               && $J^2 = -1$  
\end{tabular} 
\end{center}
\vskip 1em
\noindent

\noindent
For more on how these patterns pervade mathematics, see Dyson's paper
on the three-fold way~\cite{Dyson}, and Arnold's paper on mathematical
`trinities'~\cite{Arnold}.  In the next section we focus on a few
applications to physics.

\section{Applications}
\label{physics}

Let us consider an example: $G = \SU(2)$.  In physics this group
is important because its representations describe the ways
a particle can transform under rotations.  There is one one irreducible 
representation for each `spin' $j = 0, \frac{1}{2}, 1, \dots$.  
When $j$ is an integer, this representation describes the angular
momentum states of a boson; when $j$ is a half-integer (meaning an
integer plus $\frac{1}{2}$) it describes the angular momentum states
of a fermion.  

Mathematically, $\SU(2)$ is special in many ways.  For one thing,
all its representations are self-dual.   When $j$ is an integer, 
the spin-$j$ representation is real; when $j$ is an integer plus 
$\frac{1}{2}$, the spin-$j$ representation is quaternionic.   
We shall see why this is true shortly, but we cannot resist making 
a little table that summarizes the pattern:

\begin{center}
IRREDUCIBLE REPRESENTATIONS OF $\SU(2)$
\end{center}
\vskip 0.5em
\begin{center}
\begin{tabular}{lcclc}
$j \in \Z$ & bosonic   & real         & $J^2 = 1$  & orthogonal  \\
$j \in \Z + \frac{1}{2}$ & fermionic & quaternionic & $J^2 = -1$ & symplectic 
\end{tabular} 
\end{center}
\vskip 1em

\noindent
By the results described in the last section, we conclude 
that the integer-spin
or `bosonic' irreducible representations of $\SU(2)$ are
equipped with an invariant antiunitary operator $J$ with $J^2 = 1$,
and an invariant orthogonal structure.  The half-integer-spin 
or `fermionic' ones are equipped with an invariant antiunitary 
operator $J$ with $J^2 = -1$, and an invariant symplectic structure.

Why does it work this way?  First, consider the spin-$\frac{1}{2}$ 
representation.   We can identify the group of unit quaternions 
(that is, quaternions with norm 1) with $\SU(2)$ as follows:
\[     a + bi + cj + dk \; \mapsto \;  a \sigma_0 - ib \sigma_1 -
ic \sigma_2 - id  \sigma_3  \]
where the $\sigma$'s are the Pauli matrices.  This lets us think of 
$\C^2$ as the underlying complex vector space of $\H$
in such a way that the obvious representation
of $\SU(2)$ on $\C^2$ gets identified with the action of unit quaternions
on $\H$ by left multiplication.  Since left multiplication commutes with
right multiplication, this action is quaternion-linear.  
Thus, the spin-$\frac{1}{2}$ representation of $\SU(2)$ is quaternionic.

Next, note that can build all the irreducible unitary representations of 
$\SU(2)$ by forming symmetrized tensor powers of the spin-$\frac{1}{2}$
representation.  The $n$th symmetrized tensor power, $S^n(\C^2)$, is 
the spin-$j$ representation with $j = n/2$.     
At this point a well-known general result will help:

\begin{theorem} 
\label{tensor} 
Suppose $G$ is a Lie group and $H, H' \in \Rep(G)$.  Then:
\begin{itemize}
\item
$H$ and $H'$ are real $\implies$ $H \otimes H'$ is real.
\item
$H$ is real and $H'$ is quaternionic $\implies$ $H \otimes H'$ is 
quaternionic.
\item
$H$ is quaternionic and $H'$ is real $\implies$ $H \otimes H'$ is
quaternionic.
\item
$H$ and $H'$ are quaternionic $\implies$ $H \otimes H'$ is real.
\end{itemize}
\end{theorem}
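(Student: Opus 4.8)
The plan is to prove all four statements at once using the description of real and quaternionic representations in terms of an antiunitary intertwiner, rather than in terms of the bilinear form $g$. Recall from the previous section that a representation is \textbf{real} precisely when it carries a $G$-commuting antiunitary operator $J$ with $J^2 = 1$, and \textbf{quaternionic} precisely when it carries a $G$-commuting antiunitary operator $J$ with $J^2 = -1$. The essential feature for us is that this characterization requires no irreducibility hypothesis, which is exactly the setting of the present theorem.

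Suppose then that $H$ carries such a structure $J$ with $J^2 = \eps$ and $H'$ carries such a structure $J'$ with $(J')^2 = \eps'$, where $\eps, \eps' \in \{1, -1\}$. First I would define an operator on $H \tensor H'$ by $(J \tensor J')(v \tensor w) = Jv \tensor J'w$ on pure tensors, extended additively. The key step---the only one that is not pure bookkeeping---is to check that this is \emph{well-defined}, i.e.\ that it respects the defining relations of the complex tensor product. Because \emph{both} $J$ and $J'$ are antilinear, a scalar $x \in \C$ slid across the tensor sign is conjugated on each side, and the two conjugations cancel:
\[
(J \tensor J')\big((vx) \tensor w\big) = (Jv)x^* \tensor J'w
= Jv \tensor (J'w)x^* = (J \tensor J')\big(v \tensor (xw)\big).
\]
This same computation shows $J \tensor J'$ is itself antilinear. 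It is also precisely why the hypotheses always pair antilinear with antilinear and never require tensoring a linear operator with an antilinear one, which would fail to descend to $H \tensor H'$.

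The remaining properties are each a one-line verification on pure tensors. Since the $G$-action on $H \tensor H'$ is $\rho(g) \tensor \rho'(g)$, and $J$ commutes with $\rho(g)$ while $J'$ commutes with $\rho'(g)$, the operator $J \tensor J'$ commutes with the action of $G$. Using that the inner product on $H \tensor H'$ is multiplicative on pure tensors, together with the antiunitarity $\langle Jv, Jv'\rangle = \langle v', v\rangle$ of each factor, one checks that $J \tensor J'$ is antiunitary. Most importantly, $(J \tensor J')^2 = J^2 \tensor (J')^2 = \eps\eps' \cdot \id$.

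Reading off the four sign combinations then finishes the proof: $\eps\eps' = (+1)(+1) = +1$ and $(-1)(-1) = +1$ produce a real structure on $H \tensor H'$, while $(+1)(-1) = (-1)(+1) = -1$ produce a quaternionic one, and applying the antiunitary characterization in reverse yields all four implications simultaneously. I expect the main obstacle to be conceptual rather than computational: recognizing that the well-definedness of $J \tensor J'$ forces the two factors to have the same antilinear type, so that the squares---and hence the real/quaternionic labels---simply multiply.
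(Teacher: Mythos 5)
Your proof is correct and follows essentially the same route as the paper: both characterize real/quaternionic representations (without irreducibility) via a $G$-commuting antiunitary $J$ with $J^2 = \pm 1$, form $J \tensor J'$, and read the result off from $(J \tensor J')^2 = J^2 \tensor (J')^2$. The only difference is that you spell out the well-definedness, antiunitarity, and invariance checks that the paper dismisses as obvious.
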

\begin{proof}  As we saw in the last section,
$H$ is real (resp.\ quaternionic) if and only if
it can be equipped with an invariant antiunitary operator $J \maps H \to
H$ with $J^2 = 1$ (resp.\ $J^2 = -1$).  So, pick such an antiunitary
$J$ for $H$ and also one $J'$ for $H'$.  Then $J \tensor J'$ is
an invariant antiunitary operator on $H \tensor H'$, and 
\[        (J \tensor J')^2 = J^2 \tensor {J'}^2 .\]
This makes the result obvious.  
\end{proof}
\noindent In short, the `multiplication table' for tensoring real
and quaternionic representations is just like the multiplication 
table for the numbers $1$ and $-1$---and also, not coincidentally, 
just like the usual rule for combining bosons and fermions.  

Next, note that any subrepresentation of a real representation is
real, and any subrepresentation of a quaternionic representation is
quaternionic.  It follows that since the spin-$\frac{1}{2}$ 
representation of $\SU(2)$ is quaternionic, its $n$th tensor 
power is real or quaternionic depending on whether $n$ is even or
odd, and similarly for the subrepresentation $S^n(\C^2)$.  
This is the spin-$j$ representation for $j = n/2$.  So, the
spin-$j$ representation is real or quaternionic depending on
whether $j$ is an integer or half-integer.

But what is the physical meaning of the antiunitary operator
$J$ on the spin-$j$ representation?   Let us call this representation
$\rho \maps \SU(2) \to \U(H)$, where $H$ is the Hilbert space $S^n(\C^2)$.
Choose any element $X \in \su(2)$ and let $S = d\rho(X)$.  Then
\[            J \, \exp(tS) =  \exp(tS) \, J \]
for all $t \in \R$, and thus differentiating, we obtain
\[            J S = S J .\]
But the operator $S$ is skew-adjoint.  In quantum mechanics,
we write $S = iA$ where $A$ is the self-adjoint operator corresponding
to angular momentum along some axis.  Since $J$ anticommutes 
with $i$, we have
\[             J A = -A J  \]
so for every $v \in H$ we have
\[   \langle  J v, A J v \rangle = - \langle J v, J A v \rangle =
- \langle A v, v \rangle = - \langle v, A v \rangle .\]
Thus, the antiunitary operator $J$ reverses angular momentum.  Since
the time-reversed version of a spinning particle is a particle
spinning the opposite way, physicists call $J$ \textbf{time reversal}.  

It seems natural that time reversal should obey $J^2 = 1$, as it does
in the bosonic case; it may seem strange to have $J^2 = -1$, as 
we do for fermions.  Should not applying time reversal twice get us
back where we started?  It actually does, in a crucial sense: the expectation
values of all observables are unchanged when we multiply a unit vector
$v \in H$ by $-1$.  Still, this minus sign reminds one of the equally 
curious minus sign that a fermion picks up when one rotates it a full turn.  
Are these signs related?

The answer appears to be yes, though the following argument is
more murky than we would like.  It has its origins in Feynman's 1986
Dirac Memorial Lecture on antiparticles~\cite{Feynman}, together with
the well-known `belt trick' for demonstrating that $\SO(3)$ fails to
be simply connected (which is the reason its double cover $\SU(2)$ is
needed to describe rotations in quantum theory.)  The argument uses 
string diagrams, which are a generalization of Feynman diagrams.  
Instead of explaining string diagrams here, we refer the reader to our 
prehistory of $n$-categorical physics~\cite{BL}, and especially
the section on Freyd and Yetter's 1986 paper on tangles.

The basic ideas apply to any Lie group $G$.  If an irreducible
object $H \in \Rep(G)$ is isomorphic to its dual, it comes with 
an invariant nondegenerate pairing $g \maps H \otimes H \to \C$. 
We can draw this as a `cup':
\[
 \xy 0 ;/r.15pc/:
    (-5,8)*{}="x1";
    (5,8)*{}="x2";
    (-5,5)*{}="m1";
    (5,5)*{}="m2";
    (-5,-5)*{}="k1";
    (5,-5)*{}="k2";
    (-5,-8)*{}="y1";
    (5,-8)*{}="y2";
 "m1";"k1" **\dir{-};
 "m2";"k2" **\dir{-};
 "k1";"y1" **\dir{-};
 "k2";"y2" **\dir{-};
 "y1";"y2" **\crv{(-5,-14) & (5,-14)};
\endxy
\]
If this were a Feynman diagram, it would depict two particles of type 
$H$ coming in on top and nothing going out at the bottom: since 
$H$ is isomorphic to its dual, particles of this type act like their 
own antiparticles, so they can annihilate each other.    

Since $g$ defines an isomorphism $H \iso H^*$, we get a
morphism back from $\C$ to $H \otimes H$, which we draw as 
a `cap':
\[
 \xy 0 ;/r.15pc/:
    (-5,-8)*{}="x1";
    (5,-8)*{}="x2";
    (-5,-5)*{}="m1";
    (5,-5)*{}="m2";
    (-5,5)*{}="k1";
    (5,5)*{}="k2";
    (-5,8)*{}="y1";
    (5,8)*{}="y2";
 "m1";"k1" **\dir{-};
 "m2";"k2" **\dir{-};
 "k1";"y1" **\dir{-};
 "k2";"y2" **\dir{-};
 "y1";"y2" **\crv{(-5,14) & (5,14)};
\endxy
\]
uniquely determined by requiring that the cap and cup 
obey the \textit{zig-zag identities:}
\[
\xy 0 ;/r.15pc/:
    (-10,-12)*{}="1E";
    (-10,0)*{}="1";
    (0,0)*{}="2";
    (10,0)*{}="3";
    (10,12)*{}="3B";
 "2";"1" **\crv{(0,10)& (-10,10)}
     ?(.03)*\dir{-}  ?(1)*\dir{-};
 "3";"2" **\crv{(10,-10)& (0,-10)}
     ?(.03)*\dir{-}  ;
 "1";"1E" **\dir{-};
 "3B";"3" **\dir{-};
     (-5,8.5)*{\scriptstyle };
     (5,-9)*{\scriptstyle };
\endxy
\quad = \quad
\xy ;/r.15pc/:
(-6,12)*{}; (0,12)*{}; (0,-12)*{}; **\dir{-}
?(.47)*\dir{-}; (6,-12)*{}; (4,0)*{\scriptstyle }
\endxy
\]
\[
  \xy 0 ;/r.15pc/:
    (-10,12)*{}="1E";
    (-10,0)*{}="1";
    (0,0)*{}="2";
    (10,0)*{}="3";
    (10,-12)*{}="3B";
 "2";"1" **\crv{(0,-10)& (-10,-10)}
     ?(.03)*\dir{-}  ?(1)*\dir{-};
 "3";"2" **\crv{(10,10)& (0,10)}
     ?(.03)*\dir{-}  ;
 "1";"1E" **\dir{-};
 "3B";"3" **\dir{-};
      (5,10)*{\scriptstyle  };
     (-5,-10.5)*{\scriptstyle };
\endxy
\quad = \quad
\xy ;/r.15pc/:
(-6,12)*{}; (0,12)*{}; (0,-12)*{}; **\dir{-}
?(.47)*\dir{-}; (6,-12)*{}; (4,0)*{\scriptstyle }
\endxy
\]
Indeed, for any $H \in \Rep(G)$, we have a `cup' $H^* \tensor H \to \C$
and `cap' $\C \to H \tensor H^*$ obeying the zig-zag identities: in the
language of category theory, we say that $\Rep(G)$ is compact 
closed~\cite{BL}.  When $H$ is isomorphic to its dual, we can write these using
just $H$. 

As we have seen, there are are two choices.  If $H$ is real, $g$ is 
an orthogonal structure, and we can write the equation $g(w,v) = g(v,w)$ 
using string diagrams as follows:
\[
 \xy  0 ;/r.15pc/:
    (5,8)*{}="x1";
    (-5,8)*{}="x2";
    (5,5)*{}="m1";
    (-5,5)*{}="m2";
    (5,-5)*{}="k1";
    (-5,-5)*{}="k2";
    (5,-8)*{}="y1";
    (-5,-8)*{}="y2";
 \vtwist~{"m1"}{"m2"}{"k1"}{"k2"};
 "x1";"m1" **\dir{-}; 
 "x2";"m2" **\dir{-}; 
 "k1";"y1" **\dir{-}; 
 "k2";"y2" **\dir{-}; 
    "y1";"y2" **\crv{(5,-14) & (-5,-14)};
\endxy
\quad = \quad
 \xy 0 ;/r.15pc/:
    (-5,8)*{}="x1";
    (5,8)*{}="x2";
    (-5,5)*{}="m1";
    (5,5)*{}="m2";
    (-5,-5)*{}="k1";
    (5,-5)*{}="k2";
    (-5,-8)*{}="y1";
    (5,-8)*{}="y2";
 "m1";"k1" **\dir{-}; 
 "m2";"k2" **\dir{-}; 
 "x1";"m1" **\dir{-}; 
 "x2";"m2" **\dir{-}; 
 "k1";"y1" **\dir{-}; 
 "k2";"y2" **\dir{-}; 
    "y1";"y2" **\crv{(-5,-14) & (5,-14)};
\endxy
\]
If $H$ is quaternionic, $g$ is a symplectic structure, and we can
write the equation $g(w,v) = -g(v,w)$ as follows:
\[
 \xy  0 ;/r.15pc/:
    (5,8)*{}="x1";
    (-5,8)*{}="x2";
    (5,5)*{}="m1";
    (-5,5)*{}="m2";
    (5,-5)*{}="k1";
    (-5,-5)*{}="k2";
    (5,-8)*{}="y1";
    (-5,-8)*{}="y2";
 \vtwist~{"m1"}{"m2"}{"k1"}{"k2"};
 "x1";"m1" **\dir{-}; 
 "x2";"m2" **\dir{-}; 
 "k1";"y1" **\dir{-}; 
 "k2";"y2" **\dir{-}; 
    "y1";"y2" **\crv{(5,-14) & (-5,-14)};
\endxy
\quad
= 
\quad - \; \;
 \xy 0 ;/r.15pc/:
    (-5,8)*{}="x1";
    (5,8)*{}="x2";
    (-5,5)*{}="m1";
    (5,5)*{}="m2";
    (-5,-5)*{}="k1";
    (5,-5)*{}="k2";
    (-5,-8)*{}="y1";
    (5,-8)*{}="y2";
 "m1";"k1" **\dir{-}; 
 "m2";"k2" **\dir{-}; 
 "x1";"m1" **\dir{-}; 
 "x2";"m2" **\dir{-}; 
 "k1";"y1" **\dir{-}; 
 "k2";"y2" **\dir{-}; 
    "y1";"y2" **\crv{(-5,-14) & (5,-14)};
\endxy
\]

The power of string diagrams is that we can apply the same diagrammatic
manipulation to both sides of an equation and get a valid new equation.  So
let us take the equations above, turn both sides $90^\circ$ clockwise, 
and stretch out the string a little.  When $H$ is a real representation, 
we obtain:
\[ \xy 0 ;/r.20pc/:
  (0,15)*{}="T";
  (0,-15)*{}="B";
  (0,7.5)*{}="T'";
  (0,-7.5)*{}="B'";
  "T";"T'" **\dir{-}?(.5)*\dir{-};
    "B";"B'" **\dir{-};
    (-4.5,0)*{}="MB";
    (-10.5,0)*{}="LB";
    "T'";"LB" **\crv{(-1.5,-6) & (-10.5,-6)}; \POS?(.25)*+{\hole}="2z";
    "LB"; "2z" **\crv{(-12,9) & (-3,9)};
    "2z";"B'" **\crv{(0,-4.5)};
    (2,13)*{};
    \endxy
    \qquad = \qquad 
  \xy 0 ;/r.20pc/:
  (0,15)*{}="T";
  (0,-15)*{}="B";
  (0,7.5)*{}="T'";
  (0,-7.5)*{}="B'";
  "T";"B" **\dir{-}?(.5)*\dir{-};
    (2,13)*{};
    \endxy
 \]
When $H$ is quaternionic we obtain:
\[ \xy 0 ;/r.20pc/:
  (0,15)*{}="T";
  (0,-15)*{}="B";
  (0,7.5)*{}="T'";
  (0,-7.5)*{}="B'";
  "T";"T'" **\dir{-}?(.5)*\dir{-};
    "B";"B'" **\dir{-};
    (-4.5,0)*{}="MB";
    (-10.5,0)*{}="LB";
    "T'";"LB" **\crv{(-1.5,-6) & (-10.5,-6)}; \POS?(.25)*+{\hole}="2z";
    "LB"; "2z" **\crv{(-12,9) & (-3,9)};
    "2z";"B'" **\crv{(0,-4.5)};
    (2,13)*{};
    \endxy
    \qquad = \quad \; - \; \;
  \xy 0 ;/r.20pc/:
  (0,15)*{}="T";
  (0,-15)*{}="B";
  (0,7.5)*{}="T'";
  (0,-7.5)*{}="B'";
  "T";"B" **\dir{-}?(.5)*\dir{-};
    (2,13)*{};
    \endxy
 \]
Both sides of these equations describe morphisms from $H$ to itself.
The vertical straight line at right corresponds to the identity 
morphism $1 \maps H \to H$.  The more complicated diagram at left
turns out to correspond to the morphism $J^2$.  So, it follows that
$J^2 = 1$ when the representation $H$ is real, while $J^2 = -1$ 
when $H$ is quaternionic.  

While it may seem puzzling to those who have not been initiated 
into the mysteries of string diagrams, everything so far can be
made perfectly rigorous~\cite{HDA2}.  Now comes the murky part.  
The string diagram at left: 
\[ \xy 0 ;/r.20pc/:
  (0,15)*{}="T";
  (0,-15)*{}="B";
  (0,7.5)*{}="T'";
  (0,-7.5)*{}="B'";
  "T";"T'" **\dir{-}?(.5)*\dir{-};
    "B";"B'" **\dir{-};
    (-4.5,0)*{}="MB";
    (-10.5,0)*{}="LB";
    "T'";"LB" **\crv{(-1.5,-6) & (-10.5,-6)}; \POS?(.25)*+{\hole}="2z";
    "LB"; "2z" **\crv{(-12,9) & (-3,9)};
    "2z";"B'" **\crv{(0,-4.5)};
    (2,13)*{};
\endxy
 \]
looks like a particle turning back in time, going backwards 
for a while, and then turning yet again and continuing forwards
in time.  In other words, it {\it looks like a picture of the 
square of time reversal!}   So, the fact that the corresponding 
morphism $J^2$ is indeed the square of time reversal when 
$G = \SU(2)$ somehow seems right.  

Moreover, in the theory of string diagrams it is often useful to 
draw the strings as `ribbons'.  This allows us to take the diagram at
left and pull it tight, as follows:
\[
 \xy
 (0,0)*{\includegraphics[angle=180,scale=0.5]{ribbon1.eps}};
 \endxy
\qquad = \qquad
 \xy
 (0,0)*{\includegraphics[angle=180,scale=0.5]{ribbon2.eps}};
 \endxy
\]
At left we have a particle (or actually a small piece of string) 
turning around in {\it time}, while at right we have a particle 
making a full turn in {\it space} as time passes.  So, in the case
$G = \SU(2)$ it seems to make sense that $J^2 = 1$ when 
the spin $j$ is an integer: after all, the state vector of an
integer-spin particle is unchanged when we rotate that particle a full 
turn.  Similarly, it seems to make sense that $J^2 = -1$ when $j$ is 
a half-integer: the state vector of a half-integer-spin particle gets
multiplied by $-1$ when we rotate it by a full turn.   

So, we seem to be on the brink of having a `picture proof' that 
the square of time reversal must match the result of turning
a particle 360 degrees.  Unfortunately this argument is not yet 
rigorous, since we have not explained how the topology of ribbon diagrams
(well-known in category theory) is connected to the geometry of 
rotations and time reversal.  Perhaps understanding the argument
better would lead us to new insights.  

While our discussion here focused on the group $\SU(2)$, real and
quaternionic representations of other groups are also important
in physics.  For example, gauge bosons live in the adjoint representation
of a compact Lie group $G$ on the complexification of its own Lie algebra;
since the Lie algebra is real, this is always a real representation of
$G$.  This is related to the fact that gauge bosons are their own 
antiparticles.  In the Standard Model, fermions are not their own 
antiparticles, but in some theories they can be.  Among other things,
this involves the question of whether the relevant spinor representations 
of the groups $\Spin(p,q)$ are complex, real (`Majorana spinors') or 
quaternionic (`pseudo-Majorana spinors').  The options are 
well-understood, and follow a nice pattern depending on the dimension 
and signature of spacetime modulo 8~\cite{BT}.  We should emphasize
that the spin groups $\Spin(p,q)$ are not compact unless $p = 0$ or 
$q = 0$, so their finite-dimensional complex representations are hardly 
ever unitary, and most of the results mentioned in this paper 
do not apply.  Nonetheless, we may ask if such a representation is
the complexification of a real one, or the underlying complex 
representation of a quaternionic one---and the answers have
implications for physics.

$\SU(2)$ is not the only compact Lie group with the property that
all its irreducible continuous unitary representations on complex 
Hilbert spaces are real or quaternionic.  For a group to have this 
property, it is necessary and sufficient that every element 
be conjugate to its inverse.  All compact simple Lie groups have
this property except those of type $A_n$ for $n > 1$, $D_n$ with 
$n$ odd, and $E_6$ (see Bourbaki~\cite{Bourbaki2}).  For the 
symmetric groups $S_n$, the orthogonal groups $\OO(n)$, and the 
special orthogonal groups $\SO(n)$ for $n \ge 3$, all representations 
are in fact real.  So, there is a rich supply of real and quaternionic group 
representations, which leave their indelible mark on physics even 
when we are doing complex quantum theory.

\section{Categories}
\label{categories}

As discussed so far, the three-fold way describes how certain 
complex representations of groups can be seen as arising from
real or quaternionic representations.  This gives a sense in which
ordinary complex quantum theory subsumes the real and quaternionic
theories.  But underlying this idea is something simpler yet deeper, 
which arises at the level of Hilbert spaces, even before group 
representations enter the game.  Suppose $\K$ is either $\R$, $\C$ 
or $\H$, and let $\Hilb_\K$ be the category where:
\begin{itemize}
\item
an object is a $\K$-Hilbert space;
\item
a morphism is a bounded $\K$-linear operator.
\end{itemize}  
As we shall see, any one of the categories $\Hilb_\R$, $\Hilb_\C$ 
and $\Hilb_\H$ has a faithful functor to any other.  This means that a 
Hilbert space over any one of the three normed division algebras can be seen 
as Hilbert space over any other, equipped with some extra structure!  

At the bottom of this fact lies the chain of inclusions
\[\R \subset \C \subset \H  .\]
Thanks to these, any complex vector space has an underlying real vector 
space, and any quaternionic vector space has underlying complex 
vector space.   The same is true for Hilbert spaces.
To make the underlying real vector space of a complex Hilbert space
into a real Hilbert space, we take the real part of the original complex
inner product.  This is well-known; a bit less familiar is how the 
underlying complex vector space of a quaternionic Hilbert space becomes 
a complex Hilbert space.  Here we need to take the {\bf complex part}
of the original quaternionic inner product, defined by
\[          \mathrm{Co}(a + bi + cj + dk) = a + bi .\]
One can check that these constructions give functors
\[          \Hilb_\H \to \Hilb_\C \to \Hilb_\R  .\]

A bit more formally, we have a commutative triangle of homomorphisms:
\[          
\xymatrix{
\R \ar[dr]_{\alpha} \ar[rr]^{\gamma} && \H  \\
& \C \ar[ur]_{\beta}
}
\]
There is only one choice of the homomorphisms $\alpha$ and $\gamma$.
There are many choices of $\beta$, since we can map $i \in \C$ to any
square root of $-1$ in the quaternions.  However, all the various
choices of $\beta$ are the same up to an automorphism of the quaternions.
That is, given two homomorphisms $\beta,\beta' \maps \C \to \H$, we
can always find an automorphism $\theta \maps \H \to \H$ such that $\beta'
= \theta \circ \beta$.  So, nothing important depends on the choice of
$\beta$.  Let us make a choice---say the standard one, with $\beta(i) =
i$---and use that.  This commutative triangle gives a commutative 
triangle of functors:
\[  
\xymatrix{
\Hilb_{\R} && \Hilb_{\H} \ar[dl]^{\beta^*} \ar[ll]_{\gamma^*} \\
& \Hilb_{\C} \ar[ul]^{\alpha^*}
}
\]

Now, recall that a functor $F \maps C \to D$ is {\bf faithful} if
given two morphisms $f,f' \maps c \to c'$ in $C$, $F(f) = F(f')$
implies that $f = f'$.  When $F \maps C \to D$ is faithful, we 
can think of objects of $C$ as objects of $D$ equipped with extra structure.

It is easy to see that the functors $\alpha^*, \beta^*$ and $\gamma^*$
are all faithful.  This lets us describe Hilbert spaces for a larger
normed division algebra as Hilbert spaces {\it
equipped with extra structure} for a smaller normed division algebra.  
None of this particularly new or
difficult: the key ideas are all in Adams' book on Lie groups~\cite{Adams}.  
However, it sheds light on how real, complex and
quaternionic quantum theory are related.

First we consider the extra structure possessed by the 
underlying real Hilbert space of a complex Hilbert space:
\begin{theorem}
The functor $\alpha^* \maps \Hilb_\C \to \Hilb_\R$ is faithful,
and $\Hilb_\C$ is equivalent to the category where:
\begin{itemize}
\item an object is a real Hilbert space $H$ equipped with a
unitary operator $J \maps H \to H$ with $J^2 = -1$;
\item a morphism $T \maps H \to H'$ is a bounded real-linear operator
preserving this exta structure: $T J = J' T$.
\end{itemize}
\end{theorem}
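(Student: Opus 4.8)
The plan is to dispatch faithfulness directly and then to build a functor inverse to $\alpha^*$; in fact I expect the two functors to be mutually inverse on the nose, so that the claimed equivalence will even be an isomorphism of categories. Write $\mathcal{J}$ for the category in the statement, whose objects are pairs $(H,J)$ with $H$ a real Hilbert space and $J$ a unitary operator obeying $J^2 = -1$. Faithfulness of $\alpha^*$ is immediate: the functor leaves the underlying function of a $\C$-linear operator unchanged, merely reinterpreting it as $\R$-linear (and replacing $\langle\cdot,\cdot\rangle$ by $\Re\langle\cdot,\cdot\rangle$). So if $T,T' \maps H \to H'$ are $\C$-linear with $\alpha^*(T) = \alpha^*(T')$, they agree as functions and hence $T = T'$.

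Next I would construct a functor $F \maps \Hilb_\C \to \mathcal{J}$. On objects set $F(H) = (\alpha^* H, J)$ with $Jv = vi$. Then $J^2 v = v(ii) = -v$, and $J$ is unitary for the real inner product because right multiplication by $i$ preserves the whole complex inner product --- indeed $\langle vi, wi\rangle = i^*\langle v,w\rangle\, i = \langle v,w\rangle$ since $\C$ is commutative --- and hence preserves its real part. Any $\C$-linear $T$ commutes with right multiplication by $i$, so $TJ = J'T$, and $F$ is a functor.

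The substantive direction is the inverse functor $G \maps \mathcal{J} \to \Hilb_\C$. Given $(H,J)$, define a right $\C$-action by $v(a+bi) = av + b\,Jv$, so that the module law $(vx)y = v(xy)$ follows from $J^2 = -1$, and set
\[
   \langle v, w\rangle_\C = \langle v, w\rangle_\R - i\,\langle v, Jw\rangle_\R .
\]
I expect the main obstacle to be checking that this is a positive-definite Hermitian form. The key point is that a unitary $J$ with $J^2 = -1$ is skew-adjoint for the real inner product, its adjoint being $J^{-1} = -J$; thus $\langle v, Jw\rangle_\R = -\langle Jv, w\rangle_\R$. From this one reads off conjugate symmetry $\langle w,v\rangle_\C = \langle v,w\rangle_\C^*$ and, on setting $w=v$, the vanishing $\langle v, Jv\rangle_\R = 0$, so that $\langle v,v\rangle_\C = \langle v,v\rangle_\R$ is real and positive definite. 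Complex linearity in the second slot, $\langle v, wi\rangle_\C = \langle v,w\rangle_\C\, i$, is a one-line computation using $wi = Jw$ and $J^2 = -1$. Since the complex norm equals the real norm, completeness is inherited, and a bounded $\R$-linear $T$ with $TJ = J'T$ is automatically $\C$-linear and bounded; so $G$ is a functor.

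Finally I would verify that $F$ and $G$ are mutually inverse. Applying $F$ after $G$ returns $(H,J)$, since the reconstructed ``multiplication by $i$'' is precisely the original $J$ and the real inner product is untouched. Applying $G$ after $F$ returns $H$, because $av + b(vi) = v(a+bi)$ recovers the original scalar action while the formula reassembles $\Re\langle v,w\rangle + i\,\Im\langle v,w\rangle = \langle v,w\rangle$. As both composites are identity functors, $\Hilb_\C$ is equivalent (indeed isomorphic) to $\mathcal{J}$, which completes the proof.
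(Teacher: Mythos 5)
Your proposal is correct, and it supplies something the paper actually omits: the paper states this theorem without proof, remarking only that $J$ is a ``complex structure'' arising from multiplication by $i$ and deferring the key ideas to Adams' book. Your argument is sound at every step --- faithfulness because $\alpha^*$ does not change underlying functions; the functor $F$ sending $H$ to $(\alpha^* H, \; v \mapsto vi)$; and the inverse functor $G$ built from the scalar action $v(a+bi) = av + bJv$ and the form $\langle v,w\rangle_\C = \langle v,w\rangle_\R - i\langle v,Jw\rangle_\R$, where the observation $J^\dagger = J^{-1} = -J$ is exactly what gives conjugate symmetry and the vanishing of $\langle v,Jv\rangle_\R$, hence positive definiteness and equality of norms. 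You even get the stronger conclusion that the composites are identities, so the categories are isomorphic, not merely equivalent. It is worth noting how your route differs from the method the paper uses on the two analogous theorems it \emph{does} prove (those for $\beta_*$ and $\gamma_*$): there, rather than writing down the reconstructed scalar action by hand, the paper regards $H$ together with its structure operators as a module over a tensor product of algebras, identifies that tensor product with a matrix algebra (e.g.\ $\C \otimes_\R \H$ with the $2\times 2$ complex matrices, or $\H \otimes_\R \H^{\rm op}$ with the $4\times 4$ real matrices), and extracts the smaller Hilbert space by applying an idempotent matrix. Your direct construction is the natural and most elementary one in the commutative case $\R \subset \C$, where defining a two-sided-compatible $\C$-action poses no difficulty; the paper's module-theoretic approach is heavier but is the one that survives the passage to the noncommutative quaternionic cases. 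As a proof of the stated theorem, yours is complete and self-contained.
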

\noindent This extra structure $J$ is often called a {\bf complex structure}.
It comes from our ability to multiply by $i$ in a complex Hilbert space.

Next we consider the extra structure possessed by the underlying
complex Hilbert space of a quaternionic Hilbert space.  For this
we need to generalize the concept of an antiunitary operator.  
First, given $\K$-vector spaces $V$ and $V'$, we define an 
{\bf antilinear operator} $T \maps V \to V'$ to be a map with 
\[                T(vx + wy) = T(v)x^* + T(w)y^*  \]
for all $v,w \in V$ and $x,y \in \K$.   Then, given $\K$-Hilbert spaces
$H$ and $H'$, we define an {\bf antiunitary operator} $T \maps H \to H'$ 
to be an invertible antilinear operator with 
\[            \langle T v, T w \rangle = \langle w, v \rangle \]
for all $v,w \in H$.  
\begin{theorem}
\label{quaternionic_as_complex}
The functor $\beta^* \maps \Hilb_\H \to \Hilb_\C$ is faithful,
and $\Hilb_\H$ is equivalent to the category where:
\begin{itemize}
\item an object is a complex Hilbert space $H$ equipped with an
antiunitary operator $J \maps H \to H$ with $J^2 = -1$;
\item a morphism $T \maps H \to H'$ is a bounded complex-linear operator
preserving this extra structure: $T J = J' T$.
\end{itemize}
\end{theorem}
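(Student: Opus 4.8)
The plan is to realize the equivalence by an explicit functor $F \maps \Hilb_\H \to \mathcal{D}$, where $\mathcal{D}$ is the category described in the statement, and then to verify that $F$ is fully faithful and essentially surjective. On objects, $F$ sends a quaternionic Hilbert space $H$ to its underlying complex Hilbert space $\beta^* H$ together with the operator $J \maps H \to H$ given by right multiplication by $j$; on morphisms it acts as the identity on underlying functions. First I would check that this $J$ is an antiunitary operator with $J^2 = -1$. The relation $J^2 = -1$ is immediate from $j^2 = -1$. Antilinearity over $\C$ follows from the identity $zj = j z^*$ for $z \in \C$, which gives $J(vz) = v(zj) = v(jz^*) = J(v)z^*$. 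Antiunitarity, in the form $\langle Jv, Jw\rangle_\C = \langle w, v\rangle_\C$, follows by taking the complex part $\mathrm{Co}$ of $\langle vj, wj\rangle_\H = -j\langle v,w\rangle_\H j$ and comparing with $\langle w,v\rangle_\C = \mathrm{Co}(\langle v,w\rangle_\H^*)$. Since every quaternion-linear $T$ satisfies $T(vj) = T(v)j$, it automatically intertwines these operators, so $F$ is well defined; boundedness is unaffected because the underlying norms agree, as $\langle v,v\rangle_\C = \langle v,v\rangle_\H = \|v\|^2$.

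Faithfulness is then immediate for both $F$ and $\beta^*$, since neither alters the underlying function of a morphism. For fullness, suppose $S \maps \beta^*H \to \beta^*H'$ is bounded, complex-linear, and satisfies $SJ = J'S$. Complex-linearity gives $S(vi) = S(v)i$, while the intertwining relation gives $S(vj) = S(v)j$. Because $\{1,i,j,k\}$ spans $\H$ over $\R$ and $k = ij$, these two identities force $S(vq) = S(v)q$ for every $q \in \H$; hence $S$ is quaternion-linear and equals $F(S)$.

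The substance of the argument is essential surjectivity. Given an object $(H,J)$ of $\mathcal{D}$, I would promote $H$ to a quaternionic Hilbert space $\tilde H$ with $F(\tilde H) = (H,J)$ on the nose. Writing $I$ for multiplication by $i$ and setting $K = JI$, the relations $I^2 = -1$, $J^2 = -1$ and $IJ = -JI$ --- the last coming from antilinearity of $J$ --- are precisely the quaternion relations, so
\[
   v \cdot (a + bi + cj + dk) = av + b\,I(v) + c\,J(v) + d\,K(v)
\]
defines a right $\H$-module structure extending the given complex one. The quaternionic inner product is recovered from the complex one by
\[
   \langle v, w\rangle_\H = \langle v, w\rangle_\C - \langle v, Jw\rangle_\C \, j .
\]

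I expect the verification that this formula defines a genuine quaternionic inner product to be the main obstacle, though it is a finite computation rather than a conceptual difficulty. Conjugate symmetry $\langle w,v\rangle_\H = \langle v,w\rangle_\H^*$ and right $\H$-linearity in the second slot are checked directly against the quaternion relations, and the delicate point is positive-definiteness: one must show that the $j$-component of $\langle v,v\rangle_\H$ vanishes. This is exactly where antiunitarity does the work. Applying $\langle Ja, Jb\rangle_\C = \langle b,a\rangle_\C$ with $a = v$, $b = Jv$, together with $J^2 = -1$, yields $\langle Jv, v\rangle_\C = 0$, hence $\langle v, Jv\rangle_\C = 0$, so that $\langle v,v\rangle_\H = \|v\|^2$ is real and nonnegative. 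Once this is established, the relation $\mathrm{Co}(\langle v,w\rangle_\H) = \langle v,w\rangle_\C$ confirms $F(\tilde H) = (H,J)$, completing the equivalence; faithfulness of $\beta^*$ follows as well, since $\beta^*$ factors as $F$ followed by the evidently faithful functor $\mathcal{D} \to \Hilb_\C$ that forgets $J$.
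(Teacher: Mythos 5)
Your proof is correct, and it is worth noting that it does not parallel anything explicit in the paper: the paper states this theorem without proof, remarking only that ``the key ideas are all in Adams' book,'' and the proofs it does sketch (for the companion theorems about $\beta_* \maps \Hilb_\C \to \Hilb_\H$ and $\gamma_* \maps \Hilb_\R \to \Hilb_\H$) proceed by Morita-style arguments: the extra structure makes $H$ a module over a tensor product of algebras such as $\C \tensor_\R \H \cong \Mat_2(\C)$, and a projection matrix then extracts the underlying space. Your route is instead a direct, elementary verification: you exhibit the functor $H \mapsto (\beta^* H, \, v \mapsto vj)$, check fullness by the observation that complex-linearity plus intertwining with $J$ forces quaternion-linearity, and prove essential surjectivity by reconstructing the quaternionic inner product explicitly as $\langle v,w\rangle_\H = \langle v,w\rangle_\C - \langle v, Jw\rangle_\C\, j$. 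All the computations you flag go through: the identity $\langle vj, wj\rangle_\H = -j\langle v,w\rangle_\H\, j$ does give antiunitarity after taking complex parts, the relations $I^2 = J^2 = -1$, $IJ = -JI$ do generate a right $\H$-module structure, and your use of antiunitarity to force $\langle v, Jv\rangle_\C = 0$ (hence positive-definiteness and agreement of norms, hence completeness) is exactly the delicate point and is handled correctly. What each approach buys: yours is self-contained, stays at the level of the paper's definitions, and isolates precisely where antiunitarity is indispensable; the bimodule approach the paper uses elsewhere is slicker, treats all six functors uniformly, and explains structurally \emph{why} such equivalences exist, but for this particular theorem it is arguably less natural, since a complex Hilbert space with an antiunitary $J$ satisfying $J^2 = -1$ is already, almost tautologically, a right $\H$-module in the way you describe---no matrix algebra detour is needed.
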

\noindent This extra structure $J$ is often called a {\bf quaternionic 
structure}.  We have seen it already in our study of the three-fold way.
It comes from our ability to multiply by $j$ in a quaternionic Hilbert
space.

Finally, we consider the extra structure possessed by the underlying
real Hilbert space of a quaternionic Hilbert space.  This can be
understood by composing the previous two theorems:
\begin{theorem}
The functor $\gamma^* \maps \Hilb_\H \to \Hilb_\R$ is faithful,
and $\Hilb_\H$ is equivalent to the category where:
\begin{itemize}
\item an object is a real Hilbert space $H$ equipped with two 
unitary operators $J,K \maps H \to H$ with $J^2 = K^2 = -1$ and
$JK = -KJ$;
\item a morphism $T \maps H \to H'$ is a bounded real-linear operator
preserving this extra structure: $T J = J' T$ and $T K = K' T$.
\end{itemize}
\end{theorem}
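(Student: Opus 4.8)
The plan is to read off this equivalence by composing the two preceding theorems, following the factorization $\gamma^* = \alpha^* \circ \beta^*$ recorded in the commutative triangle of functors. Faithfulness is then immediate: both $\alpha^*$ and $\beta^*$ were shown faithful, and a composite of faithful functors is faithful, so $\gamma^*$ is faithful. The substance is to identify the extra structure that accumulates when we pass from $\Hilb_\H$ all the way down to $\Hilb_\R$ through $\Hilb_\C$.

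I would begin from Theorem~\ref{quaternionic_as_complex}, which identifies $\Hilb_\H$ with the category $\mathcal{C}_1$ of complex Hilbert spaces carrying an antiunitary $J$ with $J^2 = -1$ (morphisms complex-linear and commuting with $J$), and then apply the complex-as-real theorem \emph{inside} $\mathcal{C}_1$. Under that second passage, the underlying complex Hilbert space of each object becomes a real Hilbert space equipped with a unitary complex structure $K$ (multiplication by $i$) with $K^2 = -1$. The task is to see what the surviving antiunitary $J$ becomes. Two observations settle this. First, an antiunitary operator obeys $\langle Jv, Jw\rangle = \langle w, v\rangle$, and since $\mathrm{Re}\langle w,v\rangle = \mathrm{Re}\langle v,w\rangle$, it preserves the real part of the inner product; hence on the underlying real Hilbert space $J$ is a genuine (real-)unitary operator, still with $J^2 = -1$. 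Second, antilinearity gives $J(vi) = (Jv)i^* = -(Jv)i$, which is exactly $JK = -KJ$. Thus each object yields a real Hilbert space with two unitaries $J,K$ satisfying $J^2 = K^2 = -1$ and $JK = -KJ$, precisely the stated data; and a morphism of $\mathcal{C}_1$, being complex-linear (so commuting with $K$) and commuting with $J$, becomes a real-linear $T$ with $TJ = J'T$ and $TK = K'T$.

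To confirm this is an equivalence, I would exhibit the inverse construction directly. Given a real Hilbert space $H$ with such $J,K$, define a right $\H$-action by $vi = Jv$, $vj = Kv$, $vk = KJv$. The module axioms close up mechanically from $J^2 = K^2 = -1$ and $JK = -KJ$: one checks $(vi)i = (vj)j = (vk)k = -v$ (using $(KJ)^2 = -1$) and $(vi)j = vk = -(vj)i$, and the remaining relations follow likewise. The quaternionic inner product is recovered from the real one $g$ by $\langle v,w\rangle_\H = g(v,w) - g(v,Jw)\,i - g(v,Kw)\,j - g(v,KJw)\,k$, and I would verify that this is quaternion-linear in the second slot, satisfies $\langle v,w\rangle = \langle w,v\rangle^*$, and is positive-definite (its real part being $g$). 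These checks are routine, the two constructions are visibly mutually inverse and natural, and so they assemble into the desired equivalence.

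The main obstacle is bookkeeping rather than conceptual content: one must hold the right-module convention fixed so that the sign in the antilinearity law produces $JK = -KJ$ and not $JK = KJ$, and confirm that the third quaternionic unit $k$ corresponds to $KJ$ (equivalently $-JK$) with signs consistent across both the module axioms and the inner-product formula. Since the analytic work was already carried out in the two previous theorems, essentially nothing new beyond these compatibility checks is required.
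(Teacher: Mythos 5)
Your proposal is correct and takes essentially the same approach as the paper, whose entire proof is the one-line remark that the result ``can be understood by composing the previous two theorems''---exactly your factorization $\gamma^* = \alpha^* \circ \beta^*$, with your explicit checks (the antiunitary $J$ becoming a real-unitary anticommuting with the complex structure $K$, and the reconstruction of the quaternionic action and inner product) supplying what the paper leaves implicit. One bookkeeping slip worth fixing: your inverse construction lets $i$ act by $J$ and $j$ by $K$, whereas your forward direction had $K$ acting as multiplication by $i$ and $J$ as multiplication by $j$, so the two composites are the identity only up to the natural isomorphism induced by the automorphism of $\H$ exchanging $i$ and $j$---harmless for an equivalence of categories, but as written the constructions are not \emph{literally} mutually inverse.
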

\noindent This extra structure could also be called a 
{\bf quaternionic structure}, as long as we remember that a 
quaternionic structure on a real Hilbert space is different 
than one on a complex Hilbert space.  It comes from our ability
to multiply by $j$ and $k$ in a quaternionic Hilbert space.  
Of course if we define $I = JK$, then $I, J,$ and $K$ obey the
usual quaternion relations.

The functors discussed so far all have adjoints, which are in
fact both left and right adjoints:
\[          
\xymatrix{
\Hilb_\R \ar[dr]_{\alpha_*} \ar[rr]^{\gamma_*} && \Hilb_\H  \\
& \Hilb_\C \ar[ur]_{\beta_*}
}
\]
These adjoints can easily be defined using the theory of 
bimodules~\cite{AF}.  As vector spaces, we have:
\[
\begin{array}{ccl}
\alpha_*(V) &=& V \otimes {{}_\R\C_{\C}}  \\
\beta_*(V)   &=& V \otimes {{}_\C\H_{\H}}  \\
\gamma_*(V)  &=& V \otimes {{}_\R\H_{\H}}  \\
\end{array} 
\]
In the first line here, $V$ is a real vector space, or in other words,
a right $\R$-module, while ${{}_\R\C_{\C}}$ denotes $\C$ regarded as a
$\R$-$\C$-bimodule.  Tensoring these, we obtain a right
$\C$-module, which is the desired complex vector space.  The other
lines work analogously.  It is then easy to make all these vector
spaces into Hilbert spaces.  We cannot resist mentioning that our
previous functors can be described in a similar way, just by turning
the bimodules around:
\[
\begin{array}{ccl}
\alpha^*(V)  &=& V \otimes {{}_\C\C_{\R}}  \\
\beta^*(V)   &=& V \otimes {{}_\H\H_{\C}}  \\
\gamma^*(V)  &=& V \otimes {{}_\H\H_{\R}} 
\end{array} 
\]
But instead of digressing into this subject, which is called Morita
theory~\cite{AF}, we wish merely to
note that the functors $\alpha_*, \beta_*$ and $\gamma_*$ are also
faithful.  This lets us describe Hilbert spaces for a smaller
normed division algebra in terms of Hilbert spaces for a larger one.

We begin with the functor $\alpha_*$, which is called
{\bf complexification}:
\begin{theorem}
\label{real_as_complex}
The functor $\alpha_* \maps \Hilb_\R \to \Hilb_\C$ is faithful,
and $\Hilb_\R$ is equivalent to the category where:
\begin{itemize}
\item an object is a complex Hilbert space $H$ equipped with a
antiunitary operator $J \maps H \to H$ with $J^2 = 1$;
\item a morphism $T \maps H \to H'$ is a bounded complex-linear operator
preserving this exta structure: $T J = J' T$.
\end{itemize}
\end{theorem}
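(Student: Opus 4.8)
The plan is to build an explicit quasi-inverse to $\alpha_*$ and verify it is inverse up to natural isomorphism, paralleling the proof of Theorem~\ref{quaternionic_as_complex} but with the sign of $J^2$ flipped. Write $\mathcal{C}$ for the target category (complex Hilbert spaces $(H,J)$ with $J$ antiunitary, $J^2 = 1$, and morphisms intertwining $J$). First I would pin down $\alpha_*$ as a functor into $\mathcal{C}$: on an object $H_\R \in \Hilb_\R$ the complexification $\alpha_*(H_\R) = H_\R \tensor_\R \C$ carries the Hermitian inner product extending the real one, and complex conjugation on the $\C$ factor defines an antilinear $J$ with $J^2 = 1$; a short computation shows $J$ is antiunitary. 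On a morphism $T$, the map $\alpha_*(T) = T \tensor \id_\C$ manifestly commutes with $J$. Faithfulness is then immediate: $H_\R$ sits inside $H_\R \tensor_\R \C$ as the real subspace $H_\R \tensor 1$, and $\alpha_*(T)$ restricted there is $T$, so $\alpha_*(T) = \alpha_*(T')$ forces $T = T'$.

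The core step is to produce the quasi-inverse $G \maps \mathcal{C} \to \Hilb_\R$. Given $(H,J)$, I would set $H_\R = \{v \in H \colon Jv = v\}$, the $+1$-eigenspace of $J$. Because $1 - J$ is real-linear and continuous, $H_\R = \ker(1-J)$ is a closed real subspace, hence complete. The key observation is that the inner product is real-valued on $H_\R$: for $v,w \in H_\R$ antiunitarity gives $\langle v,w\rangle = \langle Jv, Jw\rangle = \langle w,v\rangle = \langle v,w\rangle^*$. So $H_\R$ is genuinely a real Hilbert space, and $G$ sends a morphism $T$ (which commutes with $J$, hence preserves eigenspaces) to its restriction $T|_{H_\R}$.

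Next I would show that the $\pm 1$-eigenspaces of $J$ split $H$ and exhibit it as a complexification. Since $J^2 = 1$, every $v$ decomposes as $v = \tfrac12(v + Jv) + \tfrac12(v - Jv)$ into its $\pm 1$ parts; antilinearity of $J$ shows the $-1$-eigenspace is exactly $i\,H_\R$ (if $Jw = -w$ then $J(iw) = iw$). Thus $H = H_\R \oplus i\,H_\R$ as real spaces, and the map $H_\R \tensor_\R \C \to H$ sending $v \tensor (a+bi) \mapsto v a + (iv) b$ is a complex-linear isomorphism carrying conjugation on the left to $J$ on the right; checking that it matches the inner products (using that $\langle v,w\rangle$ is real on $H_\R$) makes it unitary. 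This gives the natural isomorphism $\alpha_* \circ G \iso \id_{\mathcal{C}}$. In the other direction, $G(\alpha_*(H_\R))$ recovers the $+1$-eigenspace $H_\R \tensor 1 \iso H_\R$, yielding $G \circ \alpha_* \iso \id_{\Hilb_\R}$.

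I expect the only real friction to be bookkeeping around inner products and naturality rather than any conceptual obstacle: one must confirm that the Hermitian form built on $H_\R \tensor_\R \C$ from the real inner product agrees, under the above isomorphism, with the given form on $H$, and that all the isomorphisms are natural in $H_\R$ (resp.\ $(H,J)$). Notably, the positivity argument needed in the three-fold way to rescale $J$ is not required here, since $J$ is handed to us already antiunitary with $J^2 = 1$; the substantive content is simply the eigenspace decomposition and its compatibility with the inner product.
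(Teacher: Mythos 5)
Your proposal is correct and follows essentially the same route the paper indicates: the paper states this theorem without a formal proof (citing Adams and remarking only that $J$ is conjugation, $J(u+vi) = u - vi$, on the complexification), but its three-fold way section already identifies the quasi-inverse as the fixed-point construction $H_\R = \{x \in H : Jx = x\}$ with $H \iso H_\R \otimes \C$, which is exactly your argument. Your write-up simply supplies the details the paper leaves implicit (closedness of the fixed space, real-valuedness of the restricted inner product, the eigenspace splitting $H = H_\R \oplus iH_\R$, and naturality), all of which check out.
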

\noindent
The extra structure $J$ here is often called a {\bf real structure}.
We have seen it already in our study of the three-fold way.
It is really just a version of complex conjugation.  In other
words, suppose that $H$ is a real Hilbert space.  Then any element 
of its complexification can be written uniquely as
$u + v i$ with $u,v \in H$, and then
\[         J(u + v i) = u - v i .\]

Next we consider the functor $\beta_*$ from complex to quaternionic Hilbert
spaces:
\begin{theorem}
The functor $\beta_* \maps \Hilb_\C \to \Hilb_\H$ is faithful,
and $\Hilb_\C$ is equivalent to the category where:
\begin{itemize}
\item an object is a quaternionic Hilbert space $H$ equipped with a
unitary operator $J$ with $J^2 = -1$;
\item a morphism $T \maps H \to H'$ is a bounded quaternion-linear operator
preserving this extra structure: $T J = J' T$.
\end{itemize}
\end{theorem}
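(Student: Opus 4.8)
The plan is to follow the same method as Theorems~\ref{quaternionic_as_complex} and~\ref{real_as_complex}: first exhibit the extra structure that $\beta_*(V)$ automatically carries, then reconstruct a complex Hilbert space from an abstract pair $(H,J)$, and finally check the two constructions are mutually inverse up to natural isomorphism. First I would pin down the structure $J$. Writing $\beta_*(V) = V \tensor_\C {}_\C\H_\H$, I define $J(v \tensor h) = v \tensor ih$, that is, left multiplication by $i$ on the $\H$-factor. This is well defined over $\C$ precisely because $i$ generates $\beta(\C)$ and so commutes with the scalars moved across the tensor sign; the naive attempt with $j$ in place of $i$ fails for exactly this reason, since $zj = j z^*$ for $z \in \C$. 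One checks at once that $J$ is quaternion-linear (it acts on the left, the module on the right), that $J^2 = -1$, and that $J$ is unitary for the inner product $\langle v \tensor h, v' \tensor h'\rangle = h^* \langle v,v'\rangle h'$, the last using that $\langle v,v'\rangle \in \C$ commutes with $i$. This gives a functor $\Phi(V) = (\beta_*(V), J)$ into the stated category, and $\beta_*$ is faithful because any $T$ is recovered from $\beta_*(T) = T \tensor \id$ by restricting to $V \tensor 1 \iso V$. It then suffices to prove $\Phi$ is full and essentially surjective.

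For essential surjectivity I would reconstruct $V$ from a pair $(H,J)$ by an eigenspace argument. Let $R_i$ denote right multiplication by $i$ on $H$; since $J$ is quaternion-linear it commutes with $R_i$, so $K := J R_i$ satisfies $K^2 = J^2 R_i^2 = 1$ and is a real-linear involution, giving $H = H_+ \oplus H_-$. Right multiplication by $j$ anticommutes with $K$ (it commutes with $J$ but conjugates $R_i$ to $R_{-i}$), so it carries $H_-$ isomorphically onto $H_+$; this is the algebraic shadow of $\H = \C \oplus \C j$ and already exhibits $H \iso H_- \tensor_\C \H$ as right $\H$-modules. I set $V := H_-$, made complex via $R_i|_{H_-}$, which preserves $H_-$ because $R_i$ commutes with $K$.

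The crux --- and the step I expect to be the main obstacle --- is the inner-product bookkeeping, which is where unitarity of $J$ enters decisively. The claim is that for $v,w \in H_-$ the quaternionic inner product $\langle v,w\rangle$ already lies in $\C$, so that $V$ inherits a genuine complex inner product. Indeed the eigenvalue condition reads $J(vi) = -v$, so unitarity of $J$ gives $\langle v,w\rangle = \langle J(vi), J(wi)\rangle = \langle vi, wi\rangle = -\,i\,\langle v,w\rangle\,i$, and the only quaternions fixed by $q \mapsto -iqi$ are those in $\C$. Positive-definiteness is inherited, and one then checks that $v \tensor h \mapsto vh$ is an isometric isomorphism $\beta_*(V) \to H$ intertwining $L_i$ with $J$, with $R_j$ supplying surjectivity onto $H_+$. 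The one genuine convention to fix here is which eigenspace to call $V$ (equivalently $L_i$ versus $L_{-i}$); taking $H_-$ is what makes $L_i$ match $J$ on the nose.

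Finally, for fullness and faithfulness: a morphism $S$ of pairs is quaternion-linear, hence commutes with every right multiplication and in particular with $R_i$, and commutes with $J$ by hypothesis, so it commutes with $K$ and preserves the eigenspace decomposition. Its restriction $T := S|_{H_-}$ is then complex-linear, and quaternion-linearity forces $S(v \tensor h) = S(v\tensor 1)\,h = (Tv \tensor 1)\,h = \beta_*(T)(v\tensor h)$, so $S = \beta_*(T)$ with $T$ unique. Thus $\Phi$ is fully faithful and essentially surjective, hence an equivalence, which is the assertion to be proved.
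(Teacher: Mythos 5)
Your proposal is correct, and at its core it runs on the same mechanism as the paper's (much terser) proof, but unpacked into a self-contained argument that never invokes the machinery the paper cites. The paper's proof observes that $J$ makes $H$ into a module over $\C \otimes_\R \H$, cites Adams for the isomorphism $\C \otimes_\R \H \cong \Mat_2(\C)$, cuts $H$ down by the idempotent $p = \left(\begin{smallmatrix} 1 & 0 \\ 0 & 0 \end{smallmatrix}\right)$ to get a complex Hilbert space $H_\C$, and then concludes $\beta_* H_\C \cong H$ by Morita-style reasoning (``applying arbitrary $2 \times 2$ complex matrices to this subspace we obtain all of $H$''). Your operator $K = J R_i$ is precisely the action of $i \otimes i \in \C \otimes_\R \H$, and your spectral projection $\frac{1}{2}(1-K)$ onto $H_-$ is conjugate to the paper's $p$, so the two constructions carve out the same subspace; but where the paper treats the algebra isomorphism and the Morita step as black boxes, you do the splitting $H = H_- \oplus H_- j$ by hand via the anticommutation of $R_j$ with $K$. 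Your version also supplies things the paper's sketch leaves entirely implicit: the verification that the quaternionic inner product restricted to $H_-$ is $\C$-valued (your computation $\langle v,w\rangle = -i\langle v,w\rangle i$ from unitarity of $J$ is exactly the point where the Hilbert-space structure, as opposed to the bare module structure, gets handled), the isometry of $v \otimes h \mapsto vh$, and the fullness/faithfulness checks needed to get an actual equivalence rather than just essential surjectivity. What the paper's route buys in exchange is brevity and uniformity: the identical argument with $\H \otimes_\R \H^{\rm op} \cong \Mat_4(\R)$ disposes of the quaternification theorem, whereas your explicit eigenspace bookkeeping would have to be redone there with two commuting structures $J, K$.
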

\noindent The operator $J$ here comes from our ability to multiply
any vector in $\beta_*(H) = H \otimes {{}_\C\H_{\H}}$ on the left by  
the complex number $i$.  This result is less well-known than the 
previous ones, so we sketch a proof:

\begin{proof} Suppose $H$ is a quaternionic Hilbert space equipped
with a unitary operator $J$ with $J^2 = -1$.  Then $J$ makes
$H$ into a right module over the complex numbers, and this action of 
$\C$ commutes with the action of $\H$, so $H$ becomes a right module
over the tensor product of $\C$ and $\H$, considered as algebras
over $\R$.  But this is isomorphic to the algebra of 
$2 \times 2$ complex matrices~\cite{Adams}.   The matrix
\[    
p = \left(\begin{array}{cc}
            1 & 0 \\
            0 & 0
\end{array}\right) \]
projects $H$ down to a complex Hilbert space $H_\C$ whose complex
dimension matches the quaternionic dimension of $H$.  By applying
arbitrary $2 \times 2$ complex matrices to this subspace we obtain
all of $H$, so $\beta_* H_\C$ is naturally isomorphic to $H$.   
\end{proof}

Composing $\alpha_*$ and $\beta_*$, we obtain the functor from real to 
quaternionic Hilbert spaces.  This is sometimes called {\bf 
quaternification}:
\begin{theorem}
The functor $\gamma_* \maps \Hilb_\R \to \Hilb_\H$ is faithful,
and $\Hilb_\R$ is equivalent to the category where:
\begin{itemize}
\item an object is a quaternionic Hilbert space $H$ equipped with two
unitary operators $J, K$ with $J^2 = K^2 = -1$ and $JK = -KJ$;
\item a morphism $T \maps H \to H'$ is a bounded quaternion-linear operator
preserving this extra structure: $T J = J' T$ and $T K = K' T$.
\end{itemize}
\end{theorem}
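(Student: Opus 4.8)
The plan is to derive the theorem from the two preceding equivalences, exploiting the identity $\gamma_* = \beta_* \circ \alpha_*$ noted just above. Faithfulness then comes for free: a composite of faithful functors is faithful, and both $\alpha_*$ and $\beta_*$ were already shown to be faithful, so $\gamma_*$ is faithful. The only real content is therefore the equivalence of categories, which I would obtain by tracking the extra structure through both theorems in turn.

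By Theorem~\ref{real_as_complex}, a real Hilbert space is the same thing as a complex Hilbert space $H$ equipped with an antiunitary \emph{real structure} $J' \maps H \to H$ with $J'^2 = 1$. Applying $\beta_*$ produces the quaternionic Hilbert space $\beta_* H = H \tensor {{}_\C\H_{\H}}$, which by the preceding theorem on $\beta_*$ carries a distinguished unitary operator $J$ with $J^2 = -1$, namely left multiplication by $i$, so that $J(h \tensor q) = h \tensor iq$. The point is to see what the surviving real structure $J'$ contributes on the quaternionic side, and to recognize it as the second operator $K$ demanded by the statement. Concretely, I would set $K(h \tensor q) = J'(h) \tensor jq$ and verify four routine points: that $K$ is well defined on the tensor product over $\C$, using the antilinearity $J'(h\lambda) = J'(h)\lambda^*$ together with the identity $\bar\lambda j = j\lambda$ for $\lambda \in \C$; that $K$ is quaternion-linear for the right $\H$-action; that $K^2 = -1$, from $J'^2 = 1$ and $j^2 = -1$; and that $JK = -KJ$, from $ij = -ji$. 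Unitarity of $K$ follows from antiunitarity of $J'$ and the definition of the inner product on $\beta_* H$. Since $\gamma_* T = T \tensor \id$ visibly commutes with both $J$ and $K$, this shows that $\gamma_*$ does land in the stated category.

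The main obstacle is essential surjectivity: showing that every quaternionic Hilbert space $H$ carrying such a pair $(J,K)$ actually arises as $\gamma_* V$. Here I would observe that $J$, $K$, and $JK$ define a left $\H$-action commuting with the given right one---the relations $J^2 = K^2 = -1$ and $JK = -KJ$ are exactly what is needed for the assignment $i,j,k \mapsto J, K, JK$ to be an algebra map---so that $H$ becomes an $\H$-$\H$-bimodule. The real form is then recovered as
\[
 V = \{\, h \in H \colon Jh = hi \ \textrm{ and } \ Kh = hj \,\},
\]
where the right-hand sides use the right $\H$-action; a short computation shows that these two conditions force the relevant quaternion multiplier to lie in $\R$, so $V$ is a real subspace on which the inner product is real-valued.

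The genuinely substantive step is then the bimodule splitting: that the evaluation map $V \tensor_\R \H \to H$ is an isomorphism, giving $H \iso \gamma_* V$. This rests on the fact that an $\H$-$\H$-bimodule is free on its multiplicity space, equivalently on the algebra isomorphism $\H \tensor_\R \H \iso \Mat_4(\R)$ (the same circle of ideas used, one dimension down, in the proof for $\beta_*$ via $\C \tensor_\R \H \iso \Mat_2(\C)$ and a projection). Once this is in hand, fullness is immediate: a quaternion-linear map commuting with $J$ and $K$ preserves the recovered real forms, restricts to a real-linear map between them, and equals its own quaternification $S \tensor \id$. Combined with the faithfulness already noted, this establishes that $\gamma_*$ is an equivalence onto the stated category.
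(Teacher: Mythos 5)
Your proposal is correct and follows essentially the same route as the paper's own (sketched) proof: both turn a quaternionic Hilbert space with $J, K$ into an $\H$-$\H$-bimodule via $i,j,k \mapsto J, K, JK$, invoke $\H \tensor_\R \H^{\op} \iso \Mat_4(\R)$, and recover the real form---your fixed-point subspace $V = \{h : Jh = hi,\ Kh = hj\}$ is exactly the image of the rank-one projection $p$ the paper uses. The extra details you supply (faithfulness via $\gamma_* = \beta_* \circ \alpha_*$, well-definedness of $K$, and fullness) are routine points the paper's sketch leaves implicit.
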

\noindent
The operators $J$ and $K$ here arise from our ability to multiply
any vector in $\gamma_*(H) = H \otimes {{}_\R\H_{\H}}$ on the left 
by the quaternions $j$ and $k$.  Again, we sketch a proof of this result:

\begin{proof}
The operators $J, K$ and $I = JK$ make $H$ into a left
$\H$-module.  Since this action of $\H$ commutes with the existing
right $\H$-module structure, $H$ becomes a module over the tensor
product of $\H$ and $\H^{\rm op} \cong \H$, considered as algebras
over $\R$.  But this tensor product is isomorphic to the algebra of $4
\times 4$ real matrices~\cite{Adams}.  The matrix
\[    
p = \left(\begin{array}{cccc}
            1 & 0 & 0 & 0 \\
            0 & 0 & 0 & 0 \\
            0 & 0 & 0 & 0 \\
            0 & 0 & 0 & 0 
\end{array}\right) \]
projects $H$ down to a real Hilbert space $H_\R$ whose real dimension
matches the quaternionic dimension of $H$.  By applying
arbitrary $4 \times 4$ real matrices to this subspace we obtain
all of $H$, so $\gamma_* H_\R$ is naturally isomorphic to $H$.   
\end{proof}

Finally, it is worth noting that some of the six functors we have described
have additional nice properties:
\begin{itemize}
\item
The categories $\Hilb_\R$ and $\Hilb_\C$ 
are `symmetric monoidal categories', meaning roughly that they have 
well-behaved tensor products.   The complexification functor 
$\alpha^* \maps \Hilb_\R \to \Hilb_\C$ is a `symmetric monoidal functor',
meaning roughly that it preserves tensor products. 
\item
The categories $\Hilb_\R, \Hilb_\C$ and $\Hilb_\H$ are `dagger-categories',
meaning roughly that any morphism $T \maps H \to H'$ has a Hilbert space
adjoint $T^\dagger \maps H' \to H$ such that
\[          \langle T v, w \rangle = \langle v, T^\dagger w \rangle \]
for all $v \in H$, $w \in H'$.  All six functors preserve this dagger
operation.  
\item
For $\Hilb_\R$ and $\Hilb_\C$, the dagger structure 
interacts nicely with the tensor product, making these categories into 
`dagger-compact categories', and the functor $\alpha^*$ is compatible 
with this as well.  
\end{itemize}
\noindent For precise definitions of the quoted terms here, see
our review articles~\cite{BL,BS}.   For more on
dagger-compact categories see also the work of Abramsky and 
Coecke~\cite{A,AC} (who called them `strongly compact closed'), 
Selinger~\cite{Selinger}, and the book {\it New Stuctures for 
Physics}~\cite{Coecke}.
The three-fold way is best appreciated with the help of these 
category-theoretic ideas, but we have deliberately downplayed them
in this paper, to reduce the prerequisites.  A more 
category-theoretic treatment of symplectic and orthogonal structures 
can be found in our old paper on 2-Hilbert spaces~\cite{HDA2}.  

\section{Solutions}
\label{solutions}

In Section~\ref{problems} we raised two `problems' with 
real and quaternionic quantum theory.  Let us conclude by saying
how the three-fold way solves these.  

We noted that on a real, complex or quaternionic Hilbert space, any 
continuous one-parameter unitary group has a skew-adjoint generator $S$.
In the complex case we can write $S$ as $i$ times a self-adjoint operator
$A$, which in physics describes a real-valued observable.  The first 
`problem' is that in the real or quaternionic cases we cannot do this.  

However, now we know from Theorems \ref{quaternionic_as_complex} and
\ref{real_as_complex} that real and quaternionic Hilbert spaces can 
be seen as a complex Hilbert space with extra structure.   This solves
the problem.  Indeed, we have faithful functors
\[        \alpha_* \maps  \Hilb_\R \to \Hilb_\C   \]
and 
\[        \beta^*  \maps  \Hilb_\H \to \Hilb_\C  . \] 
As noted in the previous section, these are `dagger-functors', so
they send skew-adjoint operators to skew-adjoint operators.

So, given a skew-adjoint operator $S$ on a real Hilbert space $H$, we
can apply the functor $\alpha_*$ to reinterpret it as a skew-adjoint 
operator on the complexification of $H$, namely $H \otimes {{}_\R\C_{\C}}$.
By abuse of notation let us still call the resulting operator $S$.  
Now we can write $S = iA$.  But the resulting self-adjoint operator $A$ 
has an interesting feature: its spectrum is symmetric about $0$!

In the finite-dimensional case, all this means is that for any 
eigenvector of $A$ with eigenvalue $c \in \R$, there is corresponding 
eigenvector with eigenvalue $-c$.  This is easy to see.  Suppose 
that $A v = c v$.  By Theorem \ref{real_as_complex}, the complexification 
$H \otimes {{}_\R\C_{\C}}$ comes equipped with an antiunitary 
operator $J$ with $J^2 = 1$, and we have $SJ = JS$.   It follows that
$J v$ is an eigenvector of $A$ with eigenvalue $-c$:
\[       A J v = - i S J v = - i J S v = J i S v = - J A v = - c J v . \] 
(In this calculation we have reverted to the standard practice of
treating a complex Hilbert space as a {\it left} $\C$-module.)
In the infinite-dimensional case, we can make an analogous but more 
subtle statement about the continuous spectrum. 

Similarly, given a skew-adjoint operator $S$ on a quaternionic
Hilbert space $H$, we can apply the functor $\beta^*$ to reinterpret 
it as a skew-adjoint operator on the underlying complex Hilbert
space.   Let us again call the resulting operator $S$.  We again
can write $S = iA$.  And again, the spectrum of $A$ is symmetric about 
$0$.  The proof is the same as in the real case: now, by Theorem
\ref{quaternionic_as_complex}, the underlying complex Hilbert space
is equipped with an antiunitary operator $J$ with $J^2 = -1$, but
we again have $SJ = JS$, so the same calculation goes through.

The second `problem' is that we cannot take the tensor product of 
two quaternionic Hilbert spaces and get another quaternionic Hilbert
spaces.  But Theorem \ref{tensor} makes this seem like an odd thing
to want!  Just as two fermions make a boson, the tensor product of 
two quaternionic Hilbert spaces is naturally a {\it real} Hilbert space.
After all, Theorem \ref{quaternionic_as_complex} says that a 
quaternionic Hilbert space can be identified with a complex Hilbert space
with an antiunitary $J$ such that $J^2 = -1$.  If we
tensor two such spaces, we get a complex Hilbert space equipped with 
an antiunitary $J$ such that that $J^2 = 1$.   Theorem \ref{real_as_complex}
then says that this can be identified with a real Hilbert space.  

Further arguments of this sort give four tensor product functors:
\[
\begin{array}{ccl}
\tensor \maps \Hilb_\R \times \Hilb_\R & \to & \Hilb_\R \\
\tensor \maps \Hilb_\R \times \Hilb_\H & \to & \Hilb_\H \\
\tensor \maps \Hilb_\H \times \Hilb_\R & \to & \Hilb_\H \\
\tensor \maps \Hilb_\H \times \Hilb_\H & \to & \Hilb_\R .
\end{array}
\]
We can also tensor a real or quaternionic Hilbert space with a 
complex one and get a complex one:
\[
\begin{array}{ccl}
\tensor \maps \Hilb_\C \times \Hilb_\R & \to & \Hilb_\C \\
\tensor \maps \Hilb_\R \times \Hilb_\C & \to & \Hilb_\C \\
\tensor \maps \Hilb_\C \times \Hilb_\H & \to & \Hilb_\C \\
\tensor \maps \Hilb_\H \times \Hilb_\C & \to & \Hilb_\C  .
\end{array}
\]
Finally, of course we have:
\[ 
\begin{array}{ccl}
\tensor \maps \Hilb_\C \times \Hilb_\C & \to & \Hilb_\C .
\end{array}
\]
In short, the `multiplication' table of real, complex and quaternionic
Hilbert spaces matches the usual multiplication table for the numbers
$+1, 0, -1$.   This should remind us of the Frobenius--Schur
indictator, mentioned in the Introduction.  The moral, then, is not 
to fight the patterns of mathematics, but to notice them and follow them.

\subsection{Acknowledgements}  
I thank the organizers of QPL7 for inviting
me to present some of this material in Oxford on May 27th, 2010.  
I thank Torsten Ekedahl, Harald Hanche-Olsen,
David Roberts and others on MathOverflow for
helping me with some math questions, 
Urs Schreiber for reminding me about the 
state-operator correspondence, and Greg Friedman for pointing
out many mistakes.  I also thank the referees for their detailed
suggestions.

\end{document}